\documentclass[runningheads,a4paper]{llncs}
\usepackage{amsfonts,amssymb,latexsym,xspace}
\usepackage{amsmath,enumerate}
\usepackage[english]{babel}
\parindent 0pt
\usepackage{graphicx}
\usepackage{array,multirow,makecell}

 \def\cal{\mathcal}
\def\E{\mathbb E}
\def\hat{\widehat}
\def \und{\underline}
\def \Q {{\mathbb Q}_2}
\def \a {{\boldsymbol a}}

\begin{document}
\title{Analysis of the Continued Logarithm Algorithm}
\author{Pablo Rotondo\inst{1,2,3} \and  Brigitte Vall\'ee\inst{2} \and Alfredo Viola\inst{3}}
\institute{IRIF, {CNRS and  Universit\'e  Paris Diderot}, France
\and 	
GREYC,  {CNRS and Universit\'e de Caen},   France
\and
Universidad de la Rep\'ublica, Uruguay}

\maketitle

\begin{abstract}
The Continued Logarithm Algorithm  --{\em CL }for short-- introduced by Gosper in 1978
computes the gcd of two integers; it seems very efficient, as it only performs shifts and subtractions.
Shallit has studied its worst-case complexity in 2016 and showed it to be linear.
We here perform the average-case analysis of the algorithm : we study its  main parameters (number of iterations,  total number of shifts)
  and  obtain  precise  asymptotics for their mean values.
Our  ``dynamical'' analysis involves the  dynamical system underlying the algorithm, that  produces continued fraction expansions    whose quotients are  powers of 2.  Even though this  {\em CL} system  has already been   studied  by Chan (around 2005), 
  the  presence of powers of 2  in the quotients  ingrains into the central parameters
a dyadic flavour that cannot be grasped  solely by studying the {\em CL} system. We  thus introduce a dyadic component and deal with a two-component system.  With this new  mixed system at hand,  we then 
provide  a complete average-case analysis of the {\em CL} algorithm, with explicit constants\footnote{\small {Accepted in Latin American Theoretical Informatics 2018. \\Thanks to the {\tt Dyna3S} ANR Project 
and the {\tt AleaEnAmsud} AmSud-STIC Project.}}.
\end{abstract}

\section{Introduction}

In an unpublished manuscript \cite{gosper}, Gosper introduced the continued logarithms, a mutation of
the classical continued fractions. He writes
{``The primary advantage is the conveniently small information parcel. The restriction to integers
of regular continued fractions makes them unsuitable for very large and very small numbers. The
continued fraction for Avogadro's number, for example, cannot even be determined to one term, since
its integer part contains 23 digits, only 6 of which are known. (...) By contrast, the continued logarithm of
Avogadro's number begins with its binary order of magnitude,  and only then begins the description
equivalent to the leading digits -- a sort of recursive version of scientific notation''.}

The  idea of Gosper gives rise to an algorithm for computing  gcd's, described by Shallit in \cite{shallit}. This algorithm has  two advantages:  first,  it can be calculated starting from the most representative  bits, and uses very simple operations (subtractions and shifts); it does not employ divisions. Second,  as the  quotients which intervene in  the  associated continued fraction  are powers of two  $2^a$, we can store each  of them with $\log_2 a $ bits. Then, the algorithm seems to be of small complexity,   both in terms of computation and storage. 

\smallskip
Shallit \cite{shallit}	  performs the worst-case analysis of the algorithm,  and studies the number of steps $K(p, q)$, and the total number of shifts $S(p, q)$ that are performed on an integer input $(p, q)$ with $p < q$: he proves the inequalities
\begin{equation}  \label{worstcase}
 K(p, q) \le 2\log_2  q +2, \qquad  S(p, q)  \le (2 \log_2 q +2)\,\log_2 q \, , 
 \end{equation}   
and exhibits instances, namely the family  $(2^{n-1}, 1)$,  which show that the previous bounds are nearly optimal, 
$ K( 1, 2^{n-1}) = 2n-2, \ \   S(1,2^{n-1}) = n(n-1)/ 2 +1\, .$

\smallskip
In a personal communication, Shallit proposed us to perform the average-case analysis of the algorithm. In this paper, we answer his question. We consider the set   $\Omega_N$  which gathers the integer pairs $(p, q)$  with $0\le p \le q \le N$, endowed with the uniform probability, and we study the  mean values $\E_N[K]$ and $\E_N[ S]$ as $N \to \infty$. We prove  that these mean values are  asymptotically linear in the size $\log N$,  and exhibit their precise asymptotic behaviour for $N \to \infty$,  

\vspace{-0.25cm} 
$$ \E_N[K]   \sim  \frac 2{H} \log N , \quad E_N[S] \sim  \frac { \log 3 -\log 2}{2 \log 2\! -\!\log 3} \ �\E_N[K]  \, . $$
 The constant $H$ is related to the entropy of an associated dynamical system and 
 
 \vspace{-0.25cm}
 \begin{equation} \label{const}
 H= \frac 1 {2\log 2 \!-\! \log 3 } \!\!\left[  \frac {\pi^2}{6} +  2 \sum_{k \ge 1}  \frac {(-1)^k}  {k^2\,  2^k} - (\log 2 )(3 \log 3\!- \!4 \log 2) \right] .
 \end{equation} This entails  numerical estimates (validated by experiments) for the  mean values
 \vskip 0.1cm
 \centerline {  $\E_N[K]  \sim  1.49283  \log N, \qquad   \E_N[S] \sim 1.40942 \log N$.}
 \vskip 0.1cm 
Then, from \eqref{worstcase},  the  mean number of divisions is about half the maximum. 

\smallskip
 Our initial idea was  to perform a dynamical analysis along the lines described in \cite {Va2}. The  {\em CL} algorithm is defined as  a succession of steps, each consisting of a pseudo-division which  transforms  an   integer pair  
 into a new one. 
 This transformation may be read first on the associated rationals 
  and gives rise to  a mapping $T$ that is further extended to the real unit interval ${\cal I}$. 
  This smoothly yields a dynamical  system  $({\cal I}, T)$, the {\em CL} system,  already well studied, particularly by Chan \cite{chan1} and Borwein \cite{bor}. The system has an invariant density  $\psi$ (with an explicit expression described in   \eqref{psi}) and is ergodic. Thus we  expected  this dynamic analysis to  follow general  principles   described in \cite{Va2}.

 \medskip
 However, the analysis  of the algorithm is not so straightforward. The binary shifts, which  make the algorithm so efficient, cause many problems in the analysis.  Even on a  pair of coprime integers $(p, q)$, the algorithm  creates  intermediate pairs  $(q_{i+1}, q_i) $ which are no longer generally coprime, as their gcd is a non-trivial  power of 2. These extra gcd's are  central in the analysis of the algorithm, as they have an influence on the evolution of the sizes of the pairs  $(q_{i +1}, q_i)$ which may grow due these extra factors. As these extra factors may only be powers of 2, they are easily expressed  with the dyadic absolute value on ${\mathbb Q}$,  at least when the input 
 is rational. However, when  extending the algorithm into a  dynamical system on the unit interval, we lose  track of these factors. 
 
 \medskip 
 The (natural)  idea is thus  to  add to the usual {\em CL} dynamical system (on the unit interval ${\cal I}$)  a new component in the dyadic  field  $\Q$. 
 The dyadic component is just added  here to deal with the extra dyadic factors, as a sort of accumulator, but it is the former real component that dictates the evolution of the system.  As the initial {\em CL} system has nice properties, the mixed system inherits this good behaviour. In particular,   the  transfer operator  of the mixed system presents a dominant eigenvalue, and  the dynamical analysis may be performed successfully.  The constant $H$ in Eqn \eqref{const}  is actually  the entropy of this extended system.

 \medskip
 After this extension, the analysis follows  classical steps,  with methodology mixing tools from analytic combinatorics (generating functions, here of Dirichlet type),  Tauberian theorems (relating the singularities of these generating functions to the asymptotics of their coefficients), functional analysis (which transfers the geometry of the dynamical system into  spectral properties of the transfer operator). 
 
 \vspace{-0.2cm} 
 \paragraph{\bf \em Plan of the paper and notation.}  The paper is structured into three sections.  Section 2 introduces the algorithm and its associated dynamical system, as well as  the probabilistic model,  the costs of interest and their generating functions. Then, Section 3 defines  the extended dynamical system, allowing us to work with dyadic costs; it  explains how the corresponding transfer operator  provides alternative expressions for the generating functions. Finally,  Section 4 describes  the properties of  the transfer operator, namely its dominant spectral properties on a convenient functional space. With Tauberian theorem, it  provides the final  asymptotic estimates  for the mean values of the main costs of interest. 
 
\smallskip 
 For an integer $q$,  $\delta(q)$ denotes the dyadic valuation, i.e.,   is the greatest integer $k$  for which $2^k$  divides $q$.  The dyadic norm  $|\cdot|_2$ is defined on ${\mathbb Q}$   with the equality  $|a/b|_2 := 2^{\delta(b)-\delta(a)}$. The  dyadic field  $\mathbb{Q}_2$ is the completion of ${\mathbb Q}$ for this norm. See \cite{Ko} for more details about the dyadic field $\Q$.

\section{The {\em CL} algorithm and  its dynamical system.}

We kick off
 this section with a precise description of the {\em CL} algorithm, followed by its extension to the whole of the unit interval $\cal I$, giving rise to a dynamical system, called the {\em CL} system, whose inverse branches  capture  all of our costs of interest.  
 Then 
 we recall the already known features   of the {\em CL} system and we present the  probabilistic model, with   its generating functions.

\vspace{-0.2cm}
\paragraph{\bf \em Description of the algorithm.} The algorithm, described by Shallit in \cite{shallit},  is a sequence of (pseudo)--divisions: each division   associates to  a pair    $(p, q)$\footnote{Our notations are not the same as in the paper of Shallit as we reverse the roles of $p$ and $q$.} with $p<q$   a new pair $(r, p')$ (where $r$ is the ``remainder'')  defined  as follows
$$ q = 2^{a} p+ r,\quad  p' = 2^{a} p, \qquad \hbox{ with} \quad  a=a(p,q):= \max\{k \ge 0 \mid 2^k p\leq q\}\, .$$ 
This transformation rewrites the old pair $(p, q)$ in terms of the new one
$(r, 2^a p)$   in matrix form, 
\begin{equation}\label{Ma} \left(\begin{matrix} p  \\ q\end{matrix}\right) =  N_{a}   \left(\begin{matrix} r \\2^a p \end{matrix} \right), \quad \hbox{with}\ \    
   N_a  =\left(\begin{matrix} 0 & \ \  2^{-a} \\
1  & 1 
\end{matrix}\right)  = 
   2^{-a}\,  M_a, \quad M_a = \left(\begin{matrix} 0 &  1 \\
2^{a}  & 2^a 
\end{matrix}\right) \, .
\end{equation}
The {\em CL} algorithm  begins   with  the input $(p, q)$ with $p <q$. It  lets $(q_1, q_0) := (p, q)$, then performs a sequence of divisions  

\vskip 0.1cm
\centerline{
$\left(\begin{matrix} q_{i +1}, \   q_i \end{matrix}\right)^T = N_{a_{i +1}}\     \left(\begin{matrix} q_{i +2}, 2 ^{a_{i +1}} \ q_{i +1}\end{matrix} \right)^T\,  { ,}$}
\vskip 0.1cm
  and stops  after $ k =  K(p, q)$ steps  on a pair  of the form $ (0, 2^{a_k} q_{k} )$.  
The complete execution of the algorithm   uses  the set of matrices $N_a$  defined in \eqref{Ma}, and writes the input as

\vskip 0.1cm
\centerline{
$ \left(\begin{matrix} p, \  q \end{matrix}\right)^T =  N_{a_1}\cdot N_{a_2} \cdots  N_{a_k}  \     \left(\begin{matrix} 0, \  2^{a_k} q_k \end{matrix} \right)^T\, { .}$}
\vskip 0.1cm
 The rational input $p/q$ is  then written as a continued fraction according to the LFTs  (linear fractional transformations) $h_a$ associated with matrices $N_a$  or  $M_a$, 
\begin{equation} \label{CF}
\frac{p}{q}  =
\cfrac{2^{-a_1}}{1+\cfrac{2^{-a_2}}{1+\cfrac{2^{-a_3}}{1+\ddots\cfrac{2^{-a_k}}{1}}}} = h_{a_1} \circ h_{a_2} \circ \cdots \circ h_{a_k} (0) 
\,, \quad  \hbox{with} \ \  h_a:x \mapsto \frac {2^{-a}} {1+x}\, .
\end{equation}

 Moreover, it is possible to choose the last exponent $a_k$ to be $0$.  (and the last quotient  to  be $1$). 
 This is a gcd algorithm:  as  $q_k$ is equal to $\gcd (p, q)$ up to a power of 2,   the {\em CL} algorithm determines the odd part of  $\gcd(p, q)$ whereas the even part is directly determined by the dyadic valuations of $p$ and $q$.

\smallskip
Shallit \cite{shallit}  proves that this algorithm indeed terminates and  characterizes the worst-case complexity of the algorithm.    Figure \ref{exe}  describes  the execution of the algorithm on the pair $(31, 75)$.   

 \setlength{\tabcolsep}{4pt}
 \setcellgapes{3pt}
 \makegapedcells

 \begin{figure}
 \begin{center} 
\begin{tabular}{|c|c|c|c|r|r|c|c|c|}
\hline
$i$ & $a_i$ & $2^{a_{i}} q_i$ & $q_{i+1}$ & $\left(2^{a_{i}} q_i\right)_2$ & $\left(q_{i+1}\right)_2$ &
$\delta(2^{a_{i}}q_{i})$ & $\delta(q_{i+1})$ & $\delta({\hat{g}_i })$ \\ \hline
$0$ & $-$ & $75$ & $31$ & 1001011 & 11111 & $0$
& $0$ & $0$ \\
$1$ & $1$ & $62$ & $13$ & 0111110 & 1101 & $1$
& $0$ & $0$ \\
$2$ & $2$ & $52$ & $10$ & 110100 & 1010 & $2$
& $1$ & $1$ \\
$3$ & $2$ & $40$ & $12$ & 101000 & 1100 & $3$
& $2$ & $2$ \\
$4$ & $1$ & $24$ & $16$ & 11000 & 10000 & $3$
& $4$ & $3$ \\
$5$ & $0$ & $16$ & $8$ & 10000 & 1000 & $4$
& $3$ & $3$ \\
$6$ & $0$ & $8$ & $8$ & 1000 & 1000 & $3$ &
$3$ & $3$ \\
$7$ & $-$ & $8$ & $0$ & 1000 & 0 & $3$ & $\infty$ & $3$ \\
\hline
\end{tabular}
\end{center}
\vskip -0.3cm

\caption {\label {exe}\small{Execution for the input $(31, 75)$.
 Here $\hat{g}_i=\gcd(2^{a_{i}} q_i,q_{i+1})$. The  dyadic valuation   $\delta(\hat{g}_i)$ seems to  linearly increase with $i$, with  $\delta(\hat{g}_i) \sim \delta(q_{i+1}) \sim i/2 $ ($i \to \infty$). }}
\end{figure}

\vspace {-0.4cm}
 \paragraph{\bf \em Dynamical system.} 
The relations 

\vskip 0.1 cm 
\centerline{
$ (r, 2^a p)^T = N_a^{-1} (p, q)^T, \qquad  (p, q)^T = N_a (r, 2^a p)^T\,,$}
\vskip 0.1cm 
are first transformed into relations on the associated rationals $p/q, r/(2^ap)$ via the LFT's  $T_a, h_a$  associated to  matrices $N_a^{-1}$, $N_a$, 
\begin{equation}
\label{eq:branchescl}
T_a(x) := \frac{1}{2^a x} - 1\,, \qquad h_a(x) = \frac{1}{2^a (1+x)}\,,\qquad a\geq 0\,.
\end{equation} They are then extended 
to the reals of the unit interval ${\cal I} := [0, 1]$. 
  This gives rise to a dynamical system $(\mathcal{I},T)$, denoted {\em CL} in the sequel, defined on the unit interval $\mathcal{I}$, with fundamental intervals $\mathcal{I}_a := [2^{-a-1},2^{-a}]$, the surjective branches $T_a\colon \mathcal{I}_a\to \mathcal{I}$, and their inverses $h_a\colon \mathcal{I} \to \mathcal{I}_a$.

\smallskip
 The {\em CL}  system  $({\cal I} , T)$ is displayed on the left of the figure below, along with the  shift $S: {\cal I} \rightarrow {\cal I}$ 
which gives rise to  the {\em CL} system by induction on the first branch. The map $S$ is   
 a mix of the Binary and Farey  maps, as its   first branch  comes from the Binary system, and the second one from the Farey system.  On the right,  the usual Euclid dynamical system (defined from  the Gauss map) is  derived from the Farey  shift by induction on the first branch.

\begin{center}
 \begin{tabular}{|c|c||c|c|}
\hline
  \includegraphics[width=.2\linewidth]{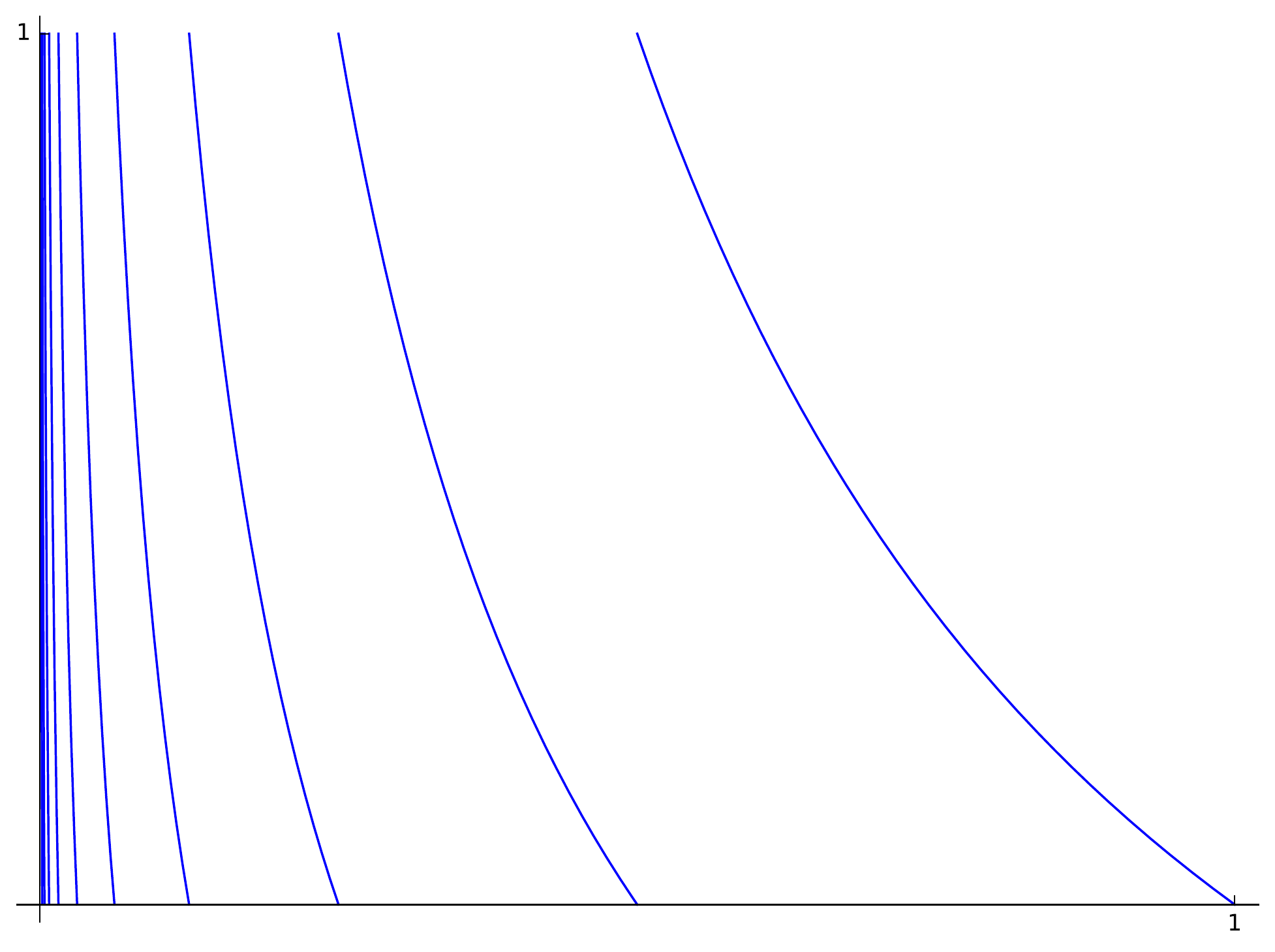}
&
  \includegraphics[width=.2\linewidth]{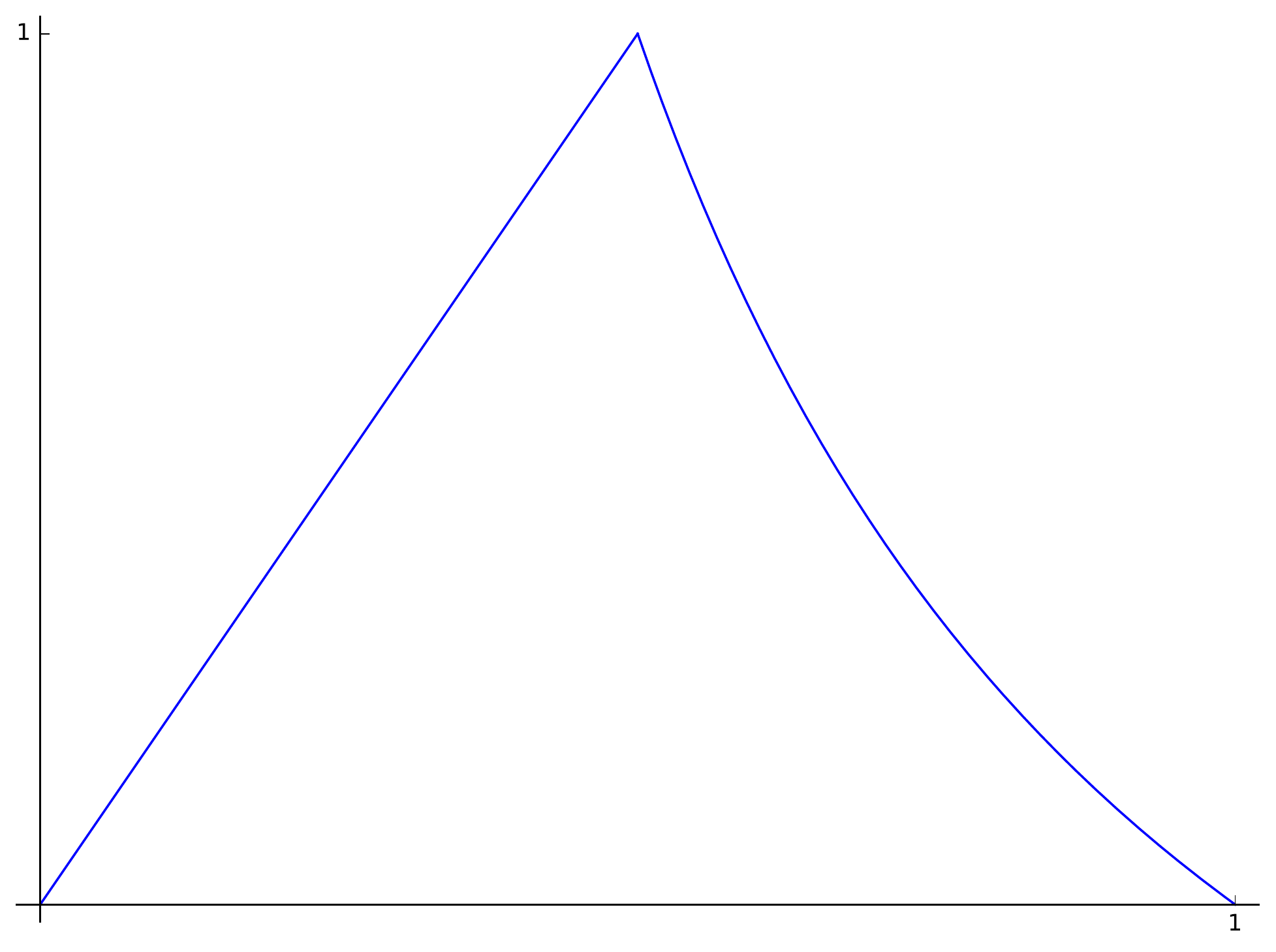} &
    \includegraphics[width=.2\linewidth]{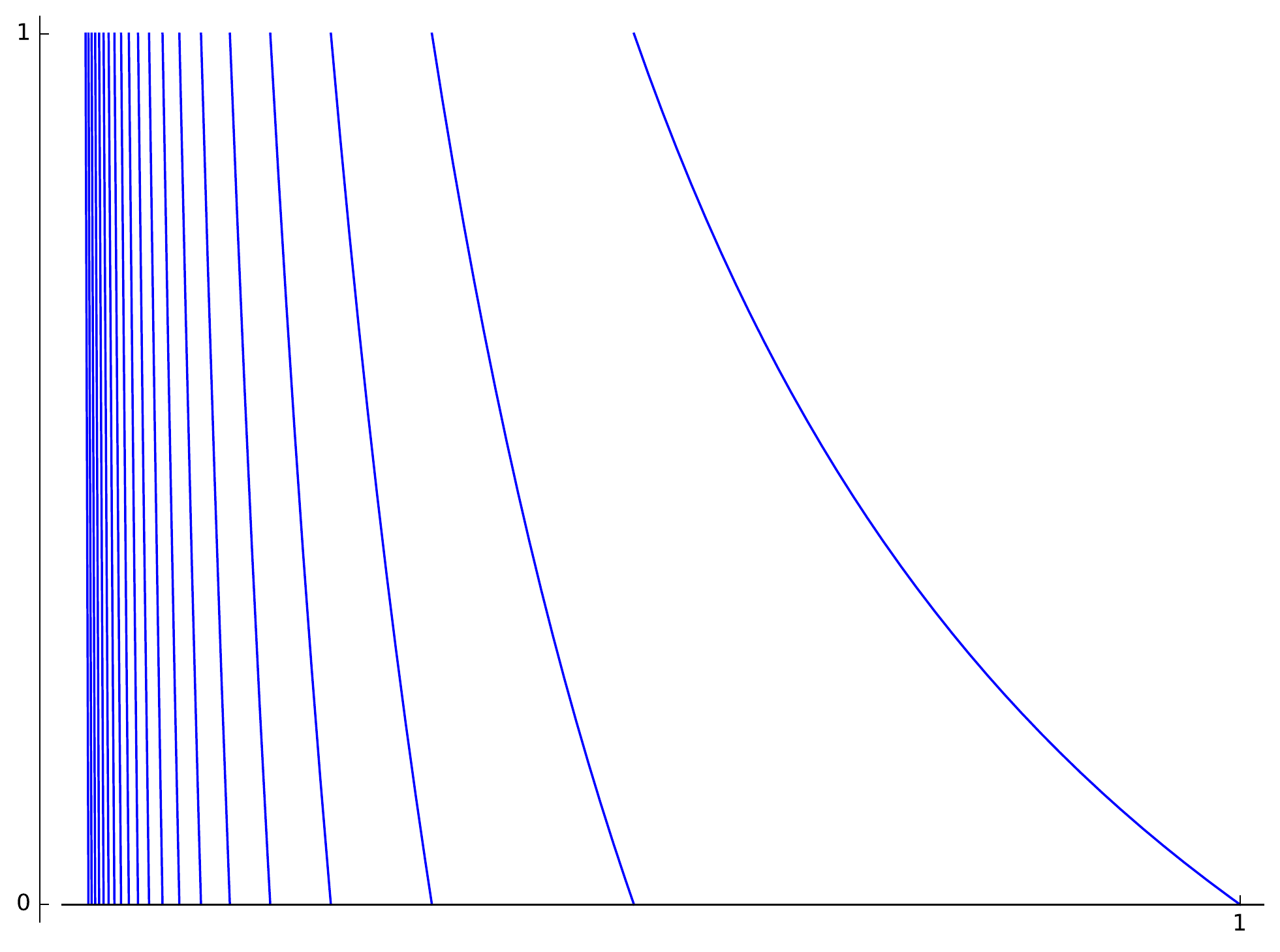}

&
  \includegraphics[width=.2\linewidth]{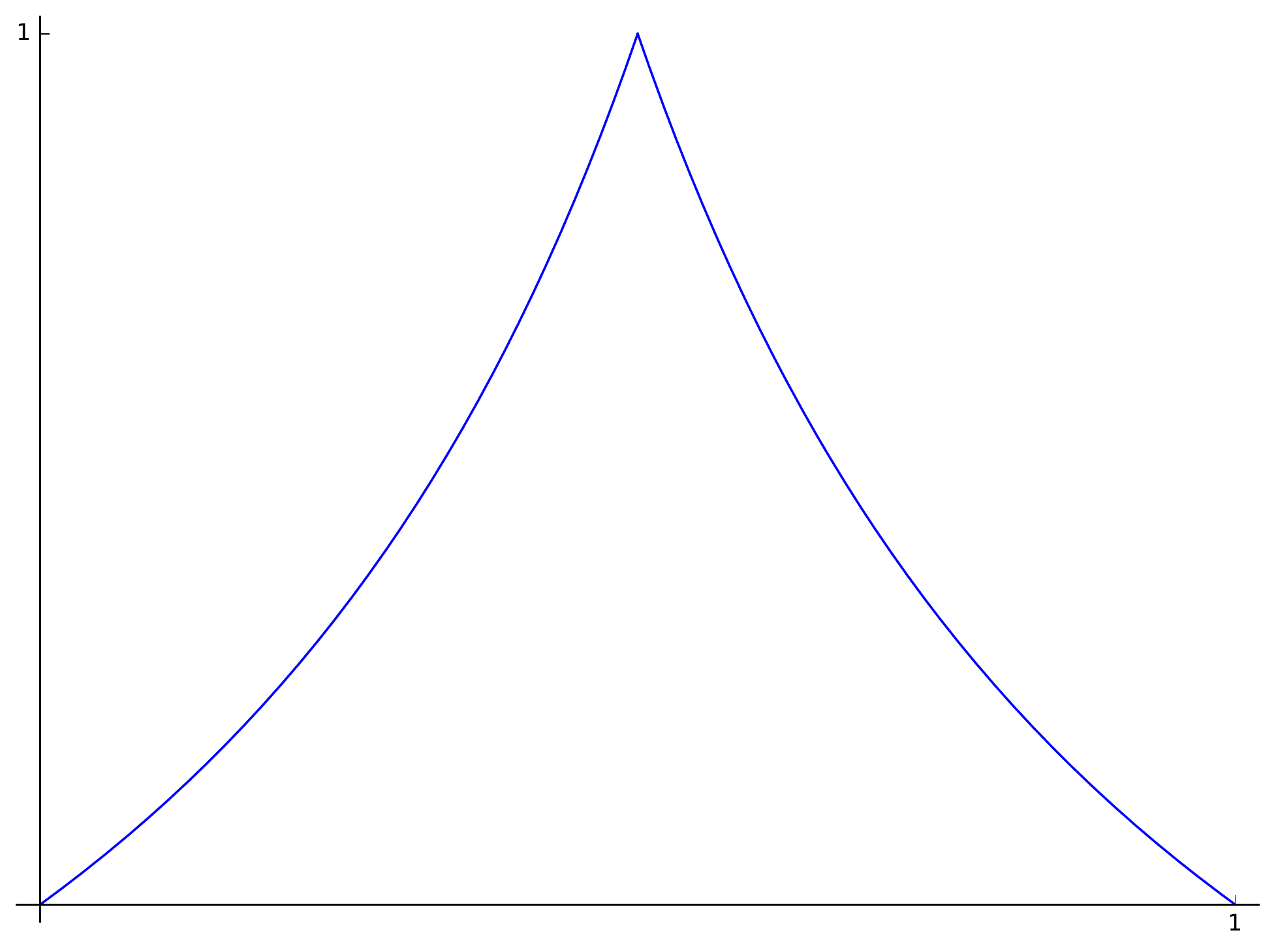} \\ \hline 
\end{tabular}
\end{center}

\smallskip
 With each $k$-uple $\a:=  \langle a_1,a_2,\ldots,a_k\rangle \in {\mathbb N}^k$  we associate  the matrix $M_\a := M_{a_1}\cdots M_{a_k}$ 
 and the LFT  $ h_\a := h_{a_1} \circ h_{a_2} \circ \cdots \circ h_{a_k} $. Then the set  ${\cal H}$  of the inverse branches, and the set  ${\cal H}^k$  of the inverse branches of $T^k$ (of depth $k$) are  
 \vskip 0.1cm 
 \centerline 
 { ${\cal H} := \{ h_a\mid a \ge 0 \} , \qquad {\cal H}^k := \{ h_\a \mid \a \in {\mathbb N}^k \}$. }
  \vskip 0.1 cm As the branches $T_a$ are surjective, the inverse branches are defined on ${\cal I}$ and the images $h_a({\cal I})$  for $a \ge 0$ form a topological partition of ${\cal I}$. This will be true at any depth, and the  intervals $h_\a({\cal I})$,  called fundamental intervals of depth $k$, form a topological partition  for $\a \in {\mathbb N}^k$.

   \vspace{-0.2cm}
   \paragraph{\bf \em  Properties of the {\em CL} system.} 
 The Perron Frobenius operator 
  \begin{equation} \label{H}
{\bf H}[f](x) := \sum_{h\in \mathcal{H}} |h^\prime(x)|\,  f(h(x)) =  \left(\frac 1 {1+x}\right)^2 \sum_{a \ge 0} 2^{-a} \,  f\left( \frac {2^{-a}}{1+x}\right)\,.
\end{equation}
  describes the evolution of densities : If $f$ is the initial density, ${\bf H}[f]$ is the density after one iteration of the system $({\cal I}, T)$.  The invariant  density $\psi$  is a fixed point  for ${\bf H}$ and satisfies the functional equation
  \begin{equation} \label {funceqn} 
   \psi(x) =  \left(\frac 1 {1+x}\right)^2 \ \sum_{a \ge 0} \,   2^{-a} \,  \psi\left( \frac {2^{-a}} {1+x}\right) \, .
   \end{equation}   Chan \cite{chan1} obtains an 
   explicit form for $\psi$
 \vspace{-0.2cm}
 \begin{equation}\label{psi} 
\psi(x) = \frac{1}{\log(4/3)} \frac{1}{(x+1)(x+2)}\, .  \end{equation}
  He  also  proves  that the system is  ergodic with respect  to $\psi$, and entropic. However, he does not provide an explicit expression for the  entropy. 
 We obtain here such an expression,  with a precise study of the transfer operator  of  the system. 
  
We introduce two (complex)  parameters $t, v$ in \eqref{H}, and deal with  a perturbation of the  operator ${\bf H}$, defined by
 \begin{equation} 
{\bf H}_{t, v} [f](x) := \!\!\sum_{h\in \mathcal{H}} |h^\prime(x)|^t \, d(h)^v \,  f(h(x)) = \!\! \left(\frac 1 {1+x}\right)^{\!2t} \!\sum_{a \ge 0} 2^{\,a(v-t)}  f\left( \frac {2^{-a}}{1+x}\right)\,.
\end{equation}
 Such an operator ${\bf H}_{t, v}$ is called a transfer operator. When $(t, v)$ satisfies $\Re (t-v)>0$, we prove  the following:  the operator ${\bf H}_{t, v}$ acts nicely on the space ${\cal C}^1({\cal I})$ 
  endowed with the norm $|\cdot |_{1, 1}$,  defined by
$ |f|_{1, 1}  :=  |f |_{0} + |f' |_{0}$, where $|\cdot|_0$ denotes the sup norm. In particular, it 
 has a dominant eigenvalue   $\lambda(t, v)$ separated from the remainder of the spectrum by a spectral gap, 
 for $(t, v)$  close to $(1, 0)$. The Taylor expansion of  $\lambda(t, v)$ near $(1, 0)$  \vskip 0.1cm 
\centerline{
$\lambda(t, v) \approx  1- A(t-1) + Dv $}
involves  the two constants  \ \ $  A = - {\partial\lambda}/{\partial t}  (1, 1, 0) ,   \ \   D =   {\partial \lambda}/{\partial v} (1, 1, 0)\,  ,  $  that 
  are expressed as mean values with respect to the invariant density $\psi$,  
\begin{equation} \label{A}
 A = E-D, \quad   E =  \E_\psi [2 |\log x|] , 
\quad D =  (\log 2) \,  \E_\psi[\und a]  \,,
\end{equation}
 (here,  the  function $\und a$  associates with  $x$ the integer  defined with the Iverson bracket $\und a(x) :=  a \cdot [\![x \in h_a({\cal I})]\!]$.     The constants   $A$ is the entropy of the system, and $E, D$ admit   explicit  expressions
  \vspace{-0.2cm}
  \begin{equation} \label{ED}
  E =    \frac 1 {\log (4/3) } \left[ \frac {\pi^2}{6} + 2 \sum_{k \ge 1} \frac {(-1)^k}  {k^2\,  2^k}\right], \quad  D = (\log 2) \,   \frac { \log (3/2)}{\log (4/3)}\, .
  \end{equation} 
  Then,  with \eqref{A} and \eqref{ED},    there is an explicit value for the entropy $A$, and 
  \vskip 0.1cm 
  \centerline{ 
  $A \doteq 1.62352\ldots, \quad D  \doteq 0.97693\ldots, \quad E \doteq 2.60045\ldots$\, .}

 \paragraph{\bf \em Main costs associated to a truncated expansion.} 
 
  \smallskip
  Each real number of the unit interval  admits an infinite continued fraction expansion derived from the dynamical system, which we call its {\em CLCF} expansion.  When truncated at a finite depth,  its  expansion  becomes finite,  as in \eqref{CF}, and  defines a LFT $h:= h_\a$. This expansion  gives rise to a rational $p/q$, (assumed to be irreducible) which is thus written as $p/q= h_\a(0)$.  
  
  \smallskip 
 On the other hand, the $k$-uple $\a$ defines a matrix $M_\a$   and an integer pair $(P,Q)$, called the continuant pair, defined by $(P,Q)^T := M_\a (0,1)^T$.  The equality $P/Q=p/q$,  holds, but, as  the integers $P$ and $Q$ are  not necessarily coprime, the pair $(P, Q)$  does not coincide  with the pair $(p, q)$. The integer $R(Q):=Q/\gcd(P,Q)$, called the reduced continuant,  is an important parameter  that actually dictates the quality
  of the rational approximation  given by the truncation of the {\em CLCF}.  
  We will see that it also plays a central role in the analysis of the {\em CL} algorithm. 
  As $\gcd(P, Q)$  divides $\left|\det (M_\a)\right|$ that is a power of two, it is itself a power of two. It  then  proves fundamental  to  deal with  dyadic tools. 
  
\smallskip
  The main interesting costs associated with a finite expansion, as in \eqref{CF},   are defined  via the  LFT $h$ and mainly  involve the continuant pair $(P, Q)$  together with the  absolute value of the determinant of the LFT $h$, denoted as $d(h)$.  The next result describes  these costs and provides  alternative expressions.

\begin{proposition} \label{G2}Consider the function $G_2: \Q \rightarrow  {\mathbb R}^+ $ (called the $\gcd$ map)   equal to   $ G_2(y) = \min ( 1, |y|_2^{-2})$, namely  
  \begin{equation} \label{G}     G_2(y) = 1 \qquad \hbox {for $|y|_2  \le   1$}, \qquad G_2(y) =  |y|_2^{-2} \quad   \hbox {for $|y|_2 > 1$}\, . 
  \end{equation}
  The main  costs  associated with the   {\em CLCF} expansion of a rational $h(0)$  
   \vskip 0.15cm 
 \centerline{
$Q, \quad g(P,Q):=\gcd(P, Q),  \quad R(P, Q) ={Q} /{\gcd (P, Q)}, \quad |Q|_2\, .$}
\vskip 0.1cm
 are all expressed  in terms of the  quadruple $(|h'(0)|,   |h'(0)|_2, d(h), G_2[h(0)]$ as
 \vskip 0.1cm 
 \centerline{
 $ Q^{-2}  =   |h'(0)|/d( h), \quad |Q|^{-2}_2 =  d(h) \, |h'(0)|_2 , $}
  \vskip 0.2cm 
  \centerline{$ R^{-2}(Q) =   |h'(0)| \,  | h'(0)|_2 \,  G_2 [h(0)] , \qquad g^2(P, Q) = d(h)\,  |h'(0)|_2 \, G_2[h(0)]
 \,. $}
  \end{proposition}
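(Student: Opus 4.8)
The plan is to reduce everything to two classical ingredients — multiplicativity of the determinant, which pins down $d(h)$, and the formula $h'(0) = \det/\delta^2$ for the derivative of a linear fractional transformation at the origin — and then to feed in one genuinely dyadic fact at the very end. First I would record how the two matrix families are related: since $N_a = 2^{-a} M_a$ by \eqref{Ma}, setting $A := a_1 + \cdots + a_k$ gives $N_\a = 2^{-A} M_\a$, so that $|\det N_\a| = 2^{-A}$ whereas $|\det M_\a| = 2^A$. Because $d$ is multiplicative under composition and the single-branch value is $d(h_a) = 2^a = |\det M_a|$ (the normalization fixed by the factor $2^{av}$ carried by $d(h)^v$ in ${\bf H}_{t, v}$), this identifies $d(h) = 2^A = |\det M_\a|$.

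Next, writing $M_\a$ so that its second column is $(P, Q)^T$, the map $h = h_\a$ is the LFT attached to $N_\a = 2^{-A} M_\a$; its lower-right entry is $2^{-A} Q$ and its determinant is $\pm 2^{-A}$. The derivative formula then gives, taking in turn the usual and the dyadic absolute value (the latter using $|2^{-A}|_2 = 2^A$ and multiplicativity of $|\cdot|_2$),
\[
|h'(0)| = \frac{2^A}{Q^2}, \qquad |h'(0)|_2 = \frac{2^{-A}}{|Q|_2^2}.
\]
Dividing and multiplying by $d(h) = 2^A$ yields the first two identities $Q^{-2} = |h'(0)|/d(h)$ and $|Q|_2^{-2} = d(h)\,|h'(0)|_2$ at once.

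The two remaining identities I would obtain by injecting the arithmetic of the continuant, together with the one new point, which I expect to be the crux of the argument. Since $g := \gcd(P, Q)$ divides $\det M_\a = \pm 2^A$ it is a power of $2$, and with $R := Q/g$ equal to the reduced denominator $q$ (coprime to $p$), multiplicativity of $|\cdot|_2$ gives $|Q|_2^{-2} = g^2\,|R|_2^{-2}$. The dyadic fact is the evaluation of the gcd map on the irreducible fraction $h(0) = p/q$: one always has $G_2(h(0)) = |R|_2^2$. Indeed $G_2(p/q) = \min(1, |q|_2^2/|p|_2^2)$; if $q$ is even then coprimality makes $p$ odd, so $|p|_2 = 1$ and the minimum equals $|q|_2^2 < 1$, while if $q$ is odd then $|q|_2 = 1$ and the minimum equals $1 = |q|_2^2$. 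Either way $G_2(h(0)) = |q|_2^2 = |R|_2^2$. This is precisely where the dyadic information about the non-coprimality of $(P, Q)$, invisible on the real interval, re-enters.

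Finally I would combine these: $g^2 = |Q|_2^{-2}\,|R|_2^2 = \bigl(d(h)\,|h'(0)|_2\bigr)\,G_2(h(0))$, which is the $g$-relation, and then $R^{-2} = g^2 Q^{-2} = |h'(0)|\,|h'(0)|_2\,G_2(h(0))$ follows formally by bringing in the already-proved first identity. Presenting the formulas in the order $g$ then $R$ makes the $R$-relation a one-line consequence via $R = Q/g$, rather than a separate computation.
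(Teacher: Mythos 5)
Your proof is correct and takes essentially the same route as the paper: the two $Q$-identities come from the determinant/derivative formulas for the LFT (in the real and dyadic absolute values, using that $d(h)=2^{A}$ is a power of $2$), the $g$-identity from the fact that $\gcd(P,Q)$ is a power of $2$, and the $R$-identity by combining the previous ones via $R=Q/g$. The only cosmetic differences are that you spell out the matrix computations the paper compresses into ``by definition'', and that the paper gets $g^2=|Q|_2^{-2}\,G_2(P/Q)$ directly from $g^2=\min\bigl(|P|_2^{-2},|Q|_2^{-2}\bigr)$, whereas you obtain the same identity by evaluating $G_2$ at the reduced fraction through a parity argument --- two equivalent one-line dyadic facts.
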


 \begin{proof} 
 One has (by definition)
 \vskip 0.1cm 
 \centerline{
$ P /Q = h(0), \qquad   Q^{-2} =  {|h'(0)|}/ d(h), \qquad   r(Q)^{-2} ={g^2(P, Q)}/ {Q^2} \, .$}
\vskip 0.1cm
 As $g(P, Q)$ is a power of 2, and using the function $G_2$ defined in  \eqref{G}, one has 
  \vskip 0.15cm
\centerline{
 $ g^{2}(P, Q) =  \min( |P|_2,  |Q|_2)^{-2}  =   |Q|_2^{-2} \min (1, |P/Q|_2^{-2} ) = |Q|_2^{-2}   G_2(P /Q)$.}
  \vskip 0.1cm
  We conclude with the equalities : \ \   
  $|{Q}|^{-2}_2 =   {|h'(0)|_2}/{ |d(h)|_2},  \ \  d(h) \cdot |d(h)|_2 = 1$. 
  \end{proof}

  Any cost $C$ of Proposition  \ref{G2}   admits  an expression of the form
\vskip 0.1cm
\centerline{  
 $ |h'(0)|^t \  |h'(0)|_2^u\  d(h)^v \ G_2[h(0]^z$\, .}
  \vskip 0.1cm
 The   quadruple $(t, u, v, z)$  associated with  the cost $C$ is denoted as $\gamma_C$. Moreover, 
 as  these costs   $C$ are expected to be of  exponential  growth with respect to the depth of the {\em CF},   we   will work  with their logarithms $c= \log C$.  Figure \ref{costs} summarizes  the result.   

 \vspace{-0.3cm}
 \begin{figure}
\begin{center}
{\rm 
\begin{tabular}{|c|c|c|c|l|}
\hline
Cost $C$ &$c= \log C$ &Quadruple   $ \gamma_C$ &  Constant  $M(c)$ &Numerical value of $M(c)$
\cr
\hline
\hline
$d(h)$ &$\sigma $ & $(0,0, 1, 0)$  & $D$ &  $\doteq 0.97693\ldots$
 \cr
$Q^{2}$&$q$ &$(-1,0, 1, 0)$ & $A +D$& $\doteq 2.60045 \ldots$
\cr
$g^2(P, Q)$ & $\varrho$ & $(0, 1, 1, 1)$ &  $B+ D$ & $\doteq 1.26071 \ldots $
\cr 
$ R^{2}(P, Q)$& $r$ & $(-1, -1, 0, -1)$ & $A- B$& $\doteq 1.33973 \ldots$
\cr
$|Q|_2^{-2} $ &$ q_2$ & $(0, 1, 1, 0)$ & $B+D$&  $\doteq 1.26071 \ldots$
\cr

\hline
\end{tabular}
}

\caption {\label {costs}\small{Main costs of interest,  with their quadruple, and the constants which intervene in the analysis  of their mean values. (see Thm 1). }}
\end{center}
\end{figure}

\paragraph*{\bf \em Generating functions. } We  deal with 
 sets of  coprime{\footnote{ This restriction can be easily removed and our analysis extends to the set of all integer pairs.}}  integer pairs 
\vskip 0.1cm 
 \centerline {$ \Omega := \{ (p, q) \mid 0 < p <q, \ \ \gcd (p, q) = 1 \}, \qquad  \Omega_N :=  \Omega \cap \{  ( p,  q) \mid q\le N \} $.}
 \vskip 0.1cm
 The set $\Omega_N$ is endowed with the uniform measure, and  we wish to study the mean values $\E_N[c]$ of parameters $c$  on $\Omega_N$. We focus on  parameters  which describe the execution of the algorithm  and are  ``read'' from  the {\em CF}$(p/q)$  built by the algorithm as in \eqref{CF}. They are defined in Proposition \ref{G2} and depend on the continuant  pair $(P,Q)$;  as already explained, the reduced continuant $R(P, Q)$  plays a fundamental role here.
 
 We deal with analytic combinatorics methodology and  work with (Dirichlet) generating functions  (dgf in short). Here is 
 the plain Dirichlet  generating function 
 \vskip -0.1cm
 \begin{equation} \label {Ss}
  S(s) := \sum_{(p, q)\in \Omega} \frac 1 {q^{2s}}  = \frac {\zeta (2s-1)} {\zeta(2s)}\, . 
  \end{equation}
 There are also two generating functions that are  associated with  a cost $C: \Omega \rightarrow {\mathbb R}^+$ (and its logarithm $c$),  
 namely the bivariate dgf and the cumulative dgf,   \vskip -0.3cm
 \begin{equation} \label{Ssw} 
  S_C(s, w) := \!\! \sum_{(p, q)\in \Omega} \frac {e^{wc(p, q)}} {q^{2s}},\ \  \hat S_C(s) := \!\!\sum_{(p, q)\in \Omega}  \frac {c(p, q)}{q^{2s}} = \frac{\partial}{\partial w} S_C(s, w)\bigg|_{w = 0} \, .
  \end{equation}\vskip -0.1cm
  The expectation $\E_N[c]$ is now expressed as a ratio  which involves the sums  $\Phi_N(S)$, $\Phi_N(\hat S_C)$ of  the  first $N$ coefficients of the  Dirichlet series $S(s)$ and $\hat S_C(s)$,
   \begin{equation} \label{sumcoef} E_N[c] = \Phi_N[\hat S_C] / \Phi_N[S]\,.
   \end{equation} 
   From principles of Analytic Combinatorics, we know  that the dominant singularity of a dgf (here its singularity of largest real part) provides precise information  (via notably its position and its nature)  about the asymptotics of  its coefficients, here closely related to the  mean value $\E_N[c]$ via Eqn \eqref{sumcoef}. Here, in the Dirichlet framework,  this transfer  from  the analytic behaviour  of the dgf  to the asymptotics of its coefficients is provided by Delange's  Tauberian Theorem \cite{De}.  
 
 \smallskip
 We   now describe  an alternative expression of these series, from which  it is possible to  obtain information regarding the dominant singularity, which will be transfered to the asymptotics of coefficients.  

\begin{proposition} \label{prop:gamma}The Dirichlet generating $S(s)$  and its bivariate version $S_C(s, w)$ relative to a cost $ C: \Omega \rightarrow {\mathbb R}$, 
admit  alternative expressions\footnote{We recall that the last exponent is $0$ by convention, and the last LFT is thus $J = h_0$.}
$$S(s)  = S_C(s, 0),  \qquad 
 S_C(s, w) =  \sum_{h \in {\cal H}^\star \cdot J}  e^{w C (h)}\  |h'(0)|^s  \   | h'(0)|^s_2 \   G_2^s \circ  h(0) \,  \, .$$ For any cost $C$  described in Figure \ref{costs},  the  general term of  $S_C(s, w)$ is  of the form 
\vskip 0.15cm
\centerline{$
|h'(0)|^t  \  | h'(0)|^u_2 \    d(h)^v \ G_2^z \circ  h(0) { \, , } $}
\vskip0.1cm
 and    involves a   quadruple  of exponents $(t, u, v, z)$,  denoted as $\gamma_C(s, w)$,   that   
 is expressed  with the quadruple $\gamma_C$  defined in  Figure \ref{costs}  as
   \begin{equation} \label{gamma}
\gamma_C(s, w)= s (1, 1, 0, 1) + w \, \gamma_C\, .
\end{equation}
\end{proposition}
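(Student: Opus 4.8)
The plan is to derive both identities from a single bijection between the set $\Omega$ of coprime pairs and the set ${\cal H}^\star \cdot J$ of truncated {\em CLCF} expansions terminating with the normalizing branch $J = h_0$. First I would invoke the uniqueness of the {\em CLCF} expansion: each rational $p/q \in {\cal I}$ possesses exactly one finite expansion of the form \eqref{CF}, and the convention that the last exponent $a_k$ equals $0$ (so that the terminal LFT is $J = h_0$) eliminates the sole remaining ambiguity. This yields a one-to-one correspondence $(p,q) \leftrightarrow h = h_\a$, under which $h(0) = P/Q = p/q$, where $(P,Q)^T := M_\a (0,1)^T$ is the associated continuant pair.

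The crux is then to rewrite the summand $q^{-2s}$ in terms of the LFT $h$. Here I would exploit that $(p,q)$ is coprime while the continuant pair $(P,Q)$ is in general coprime only up to a power of two, so that the equality $p/q = P/Q$ in lowest terms forces $q = Q/\gcd(P,Q) = R(P,Q)$, the reduced continuant. Hence $q^{-2s} = \left( R^{-2}(Q)\right)^s$, and Proposition \ref{G2} supplies the explicit factorization $R^{-2}(Q) = |h'(0)|\,|h'(0)|_2\,G_2[h(0)]$. Substituting this gives
$$ S(s) = \sum_{h \in {\cal H}^\star \cdot J} |h'(0)|^s\,|h'(0)|_2^s\,G_2^s \circ h(0), $$
which is precisely the announced expression; the identity $S(s) = S_C(s,0)$ follows at once, since the weight $e^{wc}$ collapses to $1$ at $w = 0$.

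For the bivariate version I would simply reinsert the weight: under the bijection the factor $e^{wc(p,q)}$ equals $C(h)^w$ (recall $c = \log C$), so multiplying each summand of $S(s)$ by $C(h)^w$ produces the stated formula for $S_C(s,w)$. To read off the quadruple of exponents, I would check cost by cost, using Proposition \ref{G2}, that every $C$ of Figure \ref{costs} admits a representation $C(h) = |h'(0)|^{t_0}\,|h'(0)|_2^{u_0}\,d(h)^{v_0}\,G_2^{z_0}\circ h(0)$ with the tabulated quadruple $\gamma_C = (t_0,u_0,v_0,z_0)$; for instance $Q^2 = |h'(0)|^{-1} d(h)$ yields $(-1,0,1,0)$, and $g^2(P,Q) = d(h)\,|h'(0)|_2\,G_2[h(0)]$ yields $(0,1,1,1)$. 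Raising such a representation to the power $w$ multiplies each exponent by $w$, contributing $w\,\gamma_C$, while the base factors $|h'(0)|^s\,|h'(0)|_2^s\,G_2^s$ contribute $s(1,1,0,1)$; adding the two gives \eqref{gamma}.

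The main obstacle is the bijection step, and within it the identification of $q$ with the reduced continuant $R(P,Q)$ rather than with $Q$ itself. This is exactly where the dyadic phenomenon intrudes: the binary shifts inject a spurious power-of-two factor into $\gcd(P,Q)$, and correctly quotienting it out is what forces both the dyadic norm $|h'(0)|_2$ and the gcd map $G_2$ to appear in the generating function. Once this correspondence is secured, the remainder reduces to routine substitution from Proposition \ref{G2}.
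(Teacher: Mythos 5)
Your proposal is correct and follows essentially the same route as the paper's (very terse) proof: identify the denominator $q$ with the reduced continuant $R(P,Q)$, apply the factorization $R^{-2}(Q) = |h'(0)|\,|h'(0)|_2\,G_2[h(0)]$ from Proposition \ref{G2} to get the exponent $s(1,1,0,1)$, and read off the contribution $w\,\gamma_C$ of the weight $e^{wc}=C^w$ from Figure \ref{costs}. The only difference is that you spell out the bijection between coprime pairs and expansions in ${\cal H}^\star\cdot J$ and the coprimality argument forcing $q = Q/\gcd(P,Q)$, steps the paper leaves implicit.
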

\begin{proof}   By definition, the denominator $q$  equals $R(P, Q)$.  With Figure \ref{costs}, the quadruple relative  to $q^{-2s}$ is then  $s(1, 1, 0, 1)$, whereas the quadruple  relative  to $e^{wc} = C^w$ is  just $w\gamma_C$.
\end{proof}

\smallskip We have thus described the general framework of our paper.  We  now look for an alternative form for the generating functions: in dynamical analysis,  one expresses the  dgf in terms of  the transfer operator of the dynamical system which underlies the algorithm.  Here, it is not possible to obtain such an alternative expressions if we stay in the real ``world''. This is why  we will  add a component to  the {\em CL}  system which allows us to express parameters  with a dyadic flavour.   It will be possible to express  the dgf's in term of a (quasi-inverse) of an (extended) transfer operator, and relate  their dominant singularity to the dominant eigenvalue of this extended transfer operator. 

\smallskip We then obtain our main result, precisely stated in Theorem 1, at the end of the paper:  
we  will  prove that the mean values $E_N [\log C]$   associated with our costs of interest   are  all of order $\Theta(\log N)$, and  satisfy precise asymptotics that involve  three constants $A, B, D$: the constants $D$ and $A$  come from the real word, and  have been previously defined in \eqref{ED} and \eqref{A},  but there arises  a new constant $B$  that comes from the dyadic word.

\section{The extended dynamical system.}

In this section,  we extend the {\em CL} dynamical system, adding a new component to study  the dyadic nature of our costs.  We  then introduce transfer operators, and   express  the  generating functions in terms of the quasi-inverses of the transfer operators.   

\vspace{-0.2cm} 
  \paragraph{\bf \em Extension of the dynamical system. } We will work with  a two-component  dynamical system:  its  first component is the initial {\em CL} system, to which we add a second (new) component which is used to  ``follow'' the evolution of dyadic phenomena during the execution of the first component.   
  
   We consider the set $\und{\cal I} := {\cal I} \times {\mathbb Q}_2$. We define a new shift   $\und T: \und {\cal I}  \rightarrow  \und {\cal I}$ from the characteristics of the old shift $T$ defined in  \eqref{eq:branchescl}.  As  each  branch  $T_a$, or its inverse $h_a$, is a LFT with rational coefficients,  it is well-defined on $\Q$;  it is moreover a  bijection from  $\Q \cup \{\infty\} $ to $\Q \cup \{\infty\} $.  Then, each branch $\und T_a$ of the new shift $\und T$  is  defined via the equality $\und T_a (x,y):=(T_a (x), T_a (y))$  on the fundamental domain $\und{\cal I}_a:= {\cal I}_a \times \Q$,  and the shift $\und T_a$ is a bijection from $\und{\cal I}_a $ to $\und{\cal I} := {\cal I} \times \Q$ whose inverse branch  $\und h_a\colon (x, y) \mapsto  (h_a(x), h_a(y))$ is  a bijection from $\und{\cal I} $ to $\und{\cal I}_a$.

  \vspace{-0.2cm} 
  \paragraph{\bf \em Measures.} We consider the three domains
   \vskip 0.1cm
   \centerline{
   $ {\cal B} := \Q \cap \{ |y|_2 < 1\}, \ {\cal U} :=\Q \cap \{ |y|_2 = 1\},   \ {\cal C} := \Q \cap \{ |y|_2 > 1\}.$}
\vskip 0.1cm 
 There exists a Haar measure $\nu_0$ on $\Q$ which is finite on each compact of $\Q$, and can be normalized with $\nu_0 ({\cal B}) = \nu_0({\cal U}) = 1/3$ (see  \cite{Ko}).  We will deal with the measure $\nu$ with density $G_2$ wrt to $\nu_0$, for which $\nu ({\cal C}) = 1/3$. The measure $\nu[ 2^k {\cal U}]$ equals $(1/3) 2^{-|k|}$ for any $k \in {\mathbb Z}$ and $\nu$ is a probability measure on $\Q$.   On $\und {\cal I}$,  we  deal with the  probability measure  $\rho := \mu \times \nu$ where  $\mu$ is the Lebesgue measure  on ${\cal I}$ and $\nu$ is defined on $\Q$ as  previously.
 
 \smallskip
  For  integrals which  involve   a  Haar measure, there is a  change of variables  formula\footnote{This  general result can be found for instance in Bourbaki \cite{Bou},  chapitre 10, p.36.}. As $\nu_0$ is a Haar measure, and $d \nu = G_2\,  d\nu_0$,   this leads to  the following change of variables  formula, for any   $F \in L^1(\Q, \nu)$, 
 \begin{equation} \label{cofv}\int_{\Q} |h'(y)|_2 \   F(h(y)) \ \left[\frac {G_2 (h(y))}{G_2(y)}\right] \  d \nu(y)=  \int_{\Q}   F(y) \ d \nu(y)  \, . 
 \end{equation}

\vspace{-0.2cm}   
\paragraph{\bf \em Density transformer and transfer operator. }  
 We now consider the   operator  $\und {\bf H}$ 
 defined as  a ``density transformer'' as follows : 
 with  a  function $F \in L^1(\und {\cal I}, \rho)$,      it associates  a  new function  defined by
 $$ \und {\bf H} \, [F ](x, y) :=\sum_{h \in {\cal H} } |h'(x)| \  |h'(y)|_2\  F(  h(x),  h(y)) \  \left[\frac {G_2 (h(y)) }{ G_2(y) }\right]  \, .  $$
  When $F$ is a density  in  $L^1(\und {\cal I}, \rho)$,  then  $\und {\bf H}[F]$ is indeed  the new density on $\und{\cal I}$ after one iteration of the shift $\und T$.   
 This just  follows from the  change of variables formula  \eqref{cofv} applied to   each  inverse branch $h \in {\cal H}$.  
 
 \smallskip
 Proposition \ref{prop:gamma} leads us to  
 a  new operator 
  that depends on  a quadruple $(t, u, v, z)$,
 \begin{equation} \label{Htuvz}  {\bf H}_{t, u, v, z}  [F](x, y) :=   \sum_{h \in\cal H} |h'(x)|^t \   |h'(y)|^u_2 \  d(h)^v  F(  h(x),  h(y)) \ \left[ \frac { G_2 (h(y))}{G_2(y)}\right]^{z}. 
 \end{equation}
We  will  focus on  costs described in Figure \ref{costs}:   we  thus deal with  operators associated with 
 quadruples   $ \gamma_C(s, w)$ defined in Proposition \ref {prop:gamma}, and in particular with the quadruple $(s, s, 0, s)$, and its associated operator 
 $\und{\bf H}_s := {\bf H}_{s, s, 0, s}$. 
  
\vspace{-0.2cm}   
\paragraph{\bf \em Alternative expressions of the Dirichlet generating functions.}  

We start with the expressions  of  Proposition \ref{G2},   consider the three types of dgf defined in \eqref{Ss} and \eqref{Ssw},  use the  equality  $G_2 (0) = 1$, and consider the  operator  $\und {\bf J}_s$ relative to the branch $J$ used in the last step.   
For the plain dgf in \eqref{Ss},  we obtain
 \begin{equation} \label{Ssalt}
  S(s)= \sum_{h \in {\cal H}^\star\!\cdot J}  |h'(0)|^s  \     | h'(0)|^s_2  \   G_2^s \circ  h(0)
  = \und {\bf J}_s \circ  (I- \und {\bf H}_s)^{-1} [ 1] (0, 0)\, ,
  \end{equation}
We now consider 
the  bivariate  dgf's  defined in  \eqref{Ssw}.  For the  depth $K$, one has  
 $$ S_K(s, w) =  e^w \und {\bf J}_s \circ  (I-  e^w \und {\bf H}_s)^{-1} [ 1] (0, 0)\, ;  $$
  For costs $C$ of  Figure \ref{costs},  the  bivariate  dgf   involves the quasi-inverse of  ${\bf H}_{\gamma_C(s,w)}$, 
   $$ S_C(s, w)= 
   {\bf J}_{\gamma_C(s, w)} \circ  (I- {\bf H}_{ \gamma_C (s, w)}) ^{-1} [ 1] (0, 0)\, , $$ 
    except  for $C= |Q|_2^{-2}$,  where  the function $1$ is replaced by the function $G_2^w$.

 \smallskip  
   The dgf $\hat S_C(s)$ defined in \eqref{Ssw}  is obtained with taking the derivative of the bivariate dgf wrt $w$ (at $w = 0$);  it is thus written with  a double\footnote{There is another term which involves only a quasi-inverse. It does not intervene in the analysis.}  quasi inverse  which involves the plain operator $\und{\bf H}_s$,  separated  ``in the middle'' by  the  cumulative operator   $ \und {\bf H}_{s, (C)}$, namely  
 \begin{equation} \label{hatSC} \hat  S_C(s)\asymp   \und{\bf J}_s \circ  (I- \und{\bf H}_s)^{-1}\circ  \und {\bf H}_{s, (C)}  \circ  (I- \und{\bf H}_s) ^{-1}\ [ 1] (0, 0) \, ,
 \end{equation}
$$ \hbox{and the cumulative operator is itself defined by} \quad  
    \und {\bf H}_{s, (C)}:=  \frac{ \partial}{\partial w} {\bf H}_{\gamma_C(s, w)} \Big|_{w = 0}  \, . \hskip 5.2cm $$

\section{Functional Analysis} 
   
 This section deals with  with a delicate context, which mixes the specificities of each world --the real one, and the dyadic one--.    It is devoted to the study of the quasi-inverses $(I-\und {\bf H}_s)^{-1}$ intervening in the expressions  of the generating functions of interest.   We first define an appropriate functional space  
 on which  we prove  the  operators  to act and admit  dominant spectral properties.  This entails  that the quasi-inverse   $(I-\und {\bf H}_s)^{-1}$ admits a  pole at $s= 1$, and we study its residue, which  gives rise to the constants that appear in the expectations of our main costs.

\vspace{-0.2cm}    
\paragraph{\bf \em Functional space. }   The delicate point of the dynamical analysis is the choice of a  good functional space,  that must be a subset of $L^1(\und{\cal I}, \rho)$. Here, we know that, in  the initial {\em CL} system, the transfer operator ${\bf H}_s$ acts in a good way on ${\cal C}^1({\cal I})$. Then, for a function $F$ defined on ${\underline{\mathcal{I}}}$,  the main role will be  played by the family of ``sections'' 
$\tilde F_y: x \mapsto \max (1, \log |y|_2)\,  F(x, y)$ 
which will be asked to belong to ${\cal C}^1({\cal I})$, 
  under the norm $| \cdot |_{1, 1}$,  defined as  $ |\tilde F_y |_{1, 1}  :=  |\tilde F_y |_{0} + |\tilde F_y |_{1}$ with
  
  \vskip 0.1cm
  \centerline{
$|\tilde F_y |_{0} := \sup _{x \in {\cal I} }|\tilde F(x, y)|, \quad    |\tilde F_y |_{1} :=\sup_{x \in {\cal I}}  \left| \frac{\partial}{\partial x}\tilde F(x, y)\right| \, . $}
\vskip 0.2cm 
We  work on the Banach space
\vskip 0.1 cm
\centerline{
$ {\cal F} :=\left\{ F: \und {\cal I} \rightarrow {\mathbb C} \mid   F_y  \in {\cal C}^1 ({\cal I}), \quad  y \mapsto   \tilde F_y \ \ {\rm bounded}  \right\}\,,$}
\vskip 0.1cm
endowed with the norm $ ||F|| := ||F||_0 + ||F||_1$,    with
\begin{equation}
 ||F||_0 :=  \int_{\Q}    |\tilde F_y |_{0}\  d \nu(y), \quad ||F||_1:=  \int_{\Q}  |\tilde F_y  |_{1}  \  d \nu(y) \, .
\end{equation}

 The next Propositions \ref{act}, \ref{dec} and \ref{lambda} will  describe  the behaviour  of the operator ${\bf H}_{t, u, v, z}$ on the functional space ${\cal F}$.  Their proofs  are  quite technical and are  omitted here. 
 
 \smallskip The first result exhibits 
  a subset of quadruples  $(t,u,v,z)$ which contains $(1, 1, 0, 1)$  for which the resulting operator ${\bf H}_{t, u, v, z}$  acts on ${\cal F} $.

\begin{proposition}  \label{act} The following holds:

$(a)$  When the  complex triple $(t, u, v)$   satisfies the constraint $\Re (t-v-|u-1|) >0$, 
  the operator   ${\bf H}_{t, u, v, u}$ acts on ${\cal F}$ and is analytic with respect to the   triple $(t, u, v)$. 

\smallskip
 $(b)$
 The operator $\und {\bf H}_s := {\bf H}_{s, s, 0, s}$ acts on ${\cal F}$  for  $\Re s >1/2$, and  the norm $||\cdot||_0$ of the operator $\und{\bf H}_s$ satisfies
 $||\und{\bf H}_s ||_0<1$ for $\Re s >1$. 
 
 \end{proposition}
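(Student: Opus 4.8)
I need to prove two claims about the operator ${\bf H}_{t,u,v,z}$ acting on the Banach space ${\cal F}$. Both parts reduce to estimating, for a function $F \in {\cal F}$, the section $\widetilde{({\bf H}_{t,u,v,z}F)}_y$ in the ${\cal C}^1({\cal I})$ norm $|\cdot|_{1,1}$, and then integrating the resulting bound against $d\nu(y)$ over $\Q$. The point is that the real and dyadic variables decouple at the level of each inverse branch $h_a$, so I can treat the $x$-dependence with the classical ${\cal C}^1$-estimates already known for the {\em CL} transfer operator ${\bf H}_s$, while the $y$-dependence is governed entirely by the dyadic norms $|h_a'(y)|_2$ and the weight $G_2$.

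\smallskip
\textbf{Part (a).} I start from the definition \eqref{Htuvz} with $z=u$, so the dyadic weight is $|h'(y)|_2^u\,[G_2(h(y))/G_2(y)]^u$. First I establish the key dyadic identity: for the branch $h_a$, one has $|h_a'(y)|_2 = 2^{a}\,|1+y|_2^{-2}$ (reading off $h_a'(y) = -2^{-a}/(1+y)^2$ and using multiplicativity of $|\cdot|_2$), and $d(h_a)=2^{-a}$. Combining these with the normalization $G_2(h_a(y))/G_2(y)$, I would bound the dyadic factor uniformly and integrate against $d\nu$, invoking the change-of-variables formula \eqref{cofv} to absorb the factor $|h'(y)|_2\,[G_2(h(y))/G_2(y)]$ exactly when $u=1$, and controlling the discrepancy $|h'(y)|_2^{u-1}$ when $u\neq 1$ by a geometric sum in $a$. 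The real part is handled by the established fact that ${\bf H}_s$ acts boundedly on ${\cal C}^1({\cal I})$: the exponent $t$ governs the $x$-contraction and the sum $\sum_a 2^{a(v-t)}$ converges precisely under $\Re(t-v)$ large enough. The constraint $\Re(t-v-|u-1|)>0$ arises by collecting the real threshold $\Re(t-v)>0$ together with the extra dyadic growth $2^{a|u-1|}$ coming from the mismatch $|u-1|$ in the dyadic exponent; the worst case of $u>1$ versus $u<1$ forces the absolute value. Analyticity in $(t,u,v)$ follows from normal convergence of the branch sum on compact subsets of the admissible region, since each term is entire in $(t,u,v)$ and the tail is dominated geometrically.

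\smallskip
\textbf{Part (b).} Here I specialize to $(t,u,v,z)=(s,s,0,s)$, so $v=0$ and $u=s$, placing me outside the regime $z=u$ is trivially compatible ($z=u=s$ here). The constraint from part (a), namely $\Re(t-v-|u-1|)>0$, becomes $\Re(s-|s-1|)>0$, which for real $s$ simplifies to $s>1/2$; the same computation with complex $s$ and the triangle inequality gives action on ${\cal F}$ for $\Re s > 1/2$. For the contraction claim $\|\und{\bf H}_s\|_0<1$ when $\Re s>1$, I would test directly on the constant function or bound $\|\und{\bf H}_s F\|_0$ by $\|F\|_0$ times the operator's action on the density: since $\und{\bf H}_1 = \und{\bf H}$ is the exact density transformer preserving the probability measure $\rho$, the total mass is conserved at $s=1$, and pushing $\Re s$ above $1$ strictly contracts because the summed branch weights $\sum_a 2^{-a(\Re s)}\cdots$ decrease. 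Concretely I expect $\|\und{\bf H}_s\|_0$ to be comparable to the dominant eigenvalue $\lambda(s,s,0)$ at the relevant normalization, which equals $1$ at $s=1$ and is strictly decreasing in $\Re s$, yielding the strict inequality.

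\smallskip
\textbf{Main obstacle.} The genuinely delicate step is the $y$-integration combined with the auxiliary factor $\max(1,\log|y|_2)$ hidden in the definition of the section $\tilde F_y$. Because the norm on ${\cal F}$ weights $F(x,y)$ by this logarithmic factor, applying $\und{\bf H}_s$ shifts $y\mapsto h_a(y)$ and thereby changes $\log|y|_2$ by a term linear in $a$; I must check that $\max(1,\log|h_a(y)|_2)/\max(1,\log|y|_2)$ stays integrable against the branch weights, i.e. that the logarithmic weight is compatible with the dyadic contraction and does not destroy boundedness. Controlling this ratio uniformly in $y$ and summing over $a$ — rather than the purely real ${\cal C}^1$ bookkeeping, which is standard — is where the two-component nature of the system makes the estimate technical, and is precisely why the authors remark that the proofs are omitted.
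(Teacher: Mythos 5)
The paper never gives a proof of this proposition --- it states explicitly that the proofs of Propositions~\ref{act}--\ref{lambda} are ``quite technical and are omitted here'' --- so there is no argument of the authors to compare yours against; your proposal has to stand on its own, and as it stands it has genuine gaps beyond the one you flag. The most concrete one is that part $(b)$ does \emph{not} follow from part $(a)$ the way you claim. At $(t,u,v)=(s,s,0)$ the constraint of $(a)$ reads $\Re s - |s-1|>0$, which for $s=\sigma+i\tau$ is $\sigma>\tfrac12(1+\tau^2)$: a parabolic region, much smaller than the half-plane $\Re s>1/2$ once $s$ is complex. Your appeal to ``the same computation with complex $s$ and the triangle inequality'' goes in the wrong direction: the triangle inequality is precisely what produces the modulus $|s-1|$ and hence the parabola. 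What gives the half-plane is that for positive real $X$ one has $|X^{s-1}|=X^{\Re s-1}$, so the dyadic discrepancy costs $2^{a|\Re s-1|}$; you would need to prove (and state) a refined version of $(a)$ with $|\Re u-1|$ in place of $|u-1|$. Relatedly, your discrepancy factor is misidentified: $|h'(y)|_2^{u-1}$ alone is uncontrollable, since $|h_a'(y)|_2=2^a|1+y|_2^{-2}$ blows up as $y$ approaches $-1$ dyadically. The quantity that is bounded --- it lies in $[2^{-a},2^a]$, being $2^{-a}$ for $|y|_2\le 1$ and $\min(2^a,2^{2m-a})$ for $|y|_2=2^m>1$ --- is the full Jacobian $|h_a'(y)|_2\,G_2(h_a(y))/G_2(y)$ of the measure $\nu$, and only \emph{its} $(u-1)$-power yields the $2^{a|u-1|}$ loss. (Also, the paper's convention is $d(h_a)=|\det M_a|=2^{a}$, not $2^{-a}$; with your convention your own constraint would come out as $\Re(t+v-|u-1|)>0$, inconsistent with the sum $\sum_a 2^{a(v-t)}$ you write.)

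The two points that actually carry the proposition are then either deferred or argued invalidly. The weight $\max(1,\log|y|_2)$ is not a routine nuisance: for $|y|_2=2^m$ and every branch $a\le m$, the pointwise weight of $\und{\bf H}_s$ at $x=0$ is exactly $|h_a'(0)|^s\,(2^a)^s=1$, while the image points satisfy $|h_a(y)|_2\le 1$, so the section weight there is $1$ and imposes no decay on $F$; hence $\und{\bf H}_sF(0,y)$ can be of order $m\approx\log_2|y|_2$ even when $\sup_y|\tilde F_y|_0\le 1$. Whether the weighted norm of the image is finite --- and, for $(b)$, strictly below $1$ --- is exactly the computation you postpone; it is the heart of the proposition, not a checkable afterthought. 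Finally, for $\|\und{\bf H}_s\|_0<1$ when $\Re s>1$, the argument you sketch cannot close: an operator norm \emph{dominates} the spectral radius, it is not dominated by it, so ``comparable to the dominant eigenvalue $\lambda(s,s,0)$'' proves nothing in the needed direction; and the pointwise domination of the $\sigma>1$ weights by the $\sigma=1$ weights is not strict where it matters (at $x=0$ the real ratio $(1+x)^{-2(\sigma-1)}$ equals $1$, and on the branches $a\le m$ above the dyadic ratio equals $1$ as well), so no strict norm inequality follows from monotonicity alone. A quantitative estimate, uniform over $F$ and interlocking the real contraction, the dyadic Jacobian, and the logarithmic weight, is what the proposition requires, and the proposal does not supply it.
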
 

\vspace{-0.2cm}  
\paragraph{\bf \em Dominant spectral properties of the operator.} 
The next result describes some of the main   spectral properties  of the operator on the space ${\cal F}$. Assertion $(a)$  entails that the $k$-th iterate of the operator behaves as a  true $k$-th power of its dominant eigenvalue. Then, as stated in $(c)$,   its quasi-inverse behaves as a true quasi-inverse  which involves its dominant eigenvalue.

\begin{proposition}
\label{dec}
 The following properties  hold  for the operator $ {\bf H}_{t, u, v, u}$,  when the triple $(t, u, v)$   belongs to a neighborhood ${\cal V}$ of $(1, 1, 0)$.

 \begin{itemize}

 \item [$(a)$]  There is a unique   dominant eigenvalue, separated from the remainder of the spectrum by a spectral gap, and  denoted as $\lambda(t, u, v)$, with  a (normalized)  dominant eigenfunction $ \Psi_{t, u, v}$  and  a dominant eigenmeasure $\rho_{t, u, v}$  for  the dual operator.

\smallskip 
 \item[$(b)$] At $(t, u, v, u) = (1, 1, 0, 1)$, the operator  $ {\bf H}_{t, u, v, u}$ coincides with the density transformer $\und{ \bf H}_1$. At $(1, 1, 0)$ the dominant eigenvalue 
$ \lambda(t, u, v)$ equals 1, the function $\Psi_{t, u, v}$ is the  invariant density $\Psi$  and the measure $\rho_{t, u, v}$ equals the measure $\rho$.
 
 \smallskip
 \item [$(c)$] 
The  estimate holds  for any  function $F \in L^1(\und {\cal I}, \rho)$  with  $\rho[F]\not = 0$, 

\vspace {-0.2cm}
 $$\  (I-  {\bf H}_{t, u, v, u})^{-1} [F](x, y)  \sim \frac  {\lambda(t, u, v)}{1- \lambda(t, u, v)} \  \Psi_{t, u, v}(x, y) \,  \rho_{t, u, v}[F] \, .$$
 
 \smallskip
 \item[$(d)$]  For $\Re s = 1, s \not = 1$, the spectral radius of ${\bf H}_{s, s, 0, s}$ is strictly less than 1. 
\end{itemize}
\end{proposition}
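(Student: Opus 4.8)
The plan is to transfer the known spectral picture of the real {\em CL} operator ${\bf H}_s$ on ${\cal C}^1({\cal I})$ to the extended operator ${\bf H}_{t,u,v,u}$ on ${\cal F}$, treating the dyadic second component as a perturbation that is controlled by the integration against $\nu$ built into the norm $||\cdot||$. First I would establish quasi-compactness: by Proposition \ref{act}$(a)$ the operator acts on ${\cal F}$ and is analytic in $(t,u,v)$ near $(1,1,0)$, so it suffices to verify a Lasota--Yorke (two-norm) inequality of the form $||{\bf H}_{t,u,v,u}^n F||_1 \le C r^n ||F||_1 + C' ||F||_0$ for some $r<1$, combined with compactness of the embedding provided by the ${\cal C}^1$-regularity of the sections $\tilde F_y$. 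The contraction of the inverse branches $h_a$ of the {\em CL} map gives the decay of the derivative part, exactly as in the classical analysis of the real system; the dyadic factors $|h'(y)|_2$ and the ratio $[G_2(h(y))/G_2(y)]^{z}$ are $\nu$-integrable (this is precisely the content of the change-of-variables formula \eqref{cofv} at the base point $(1,1,0,1)$), so they do not spoil the estimate. Quasi-compactness together with a positivity/irreducibility argument then yields a single dominant eigenvalue $\lambda(t,u,v)$ isolated by a spectral gap, which is assertion $(a)$; the corresponding spectral projector produces the dominant eigenfunction $\Psi_{t,u,v}$ and the dominant eigenmeasure $\rho_{t,u,v}$ of the dual.

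For assertion $(b)$ I would simply specialize: at $(t,u,v,u)=(1,1,0,1)$ the weight $|h'(x)|^t|h'(y)|_2^u d(h)^v [G_2(h(y))/G_2(y)]^{z}$ collapses to the density-transformer weight, so ${\bf H}_{1,1,0,1}=\und{\bf H}_1$ by definition. Since $\und{\bf H}_1$ preserves the probability measure $\rho=\mu\times\nu$ (again by \eqref{cofv}), its dominant eigenvalue is $1$ with invariant density $\Psi$ and dual eigenmeasure $\rho$; this identifies $\lambda(1,1,0)=1$, $\Psi_{1,1,0}=\Psi$, $\rho_{1,1,0}=\rho$. Assertion $(c)$ is the standard consequence of the spectral gap: writing ${\bf H}_{t,u,v,u}=\lambda\,\Pi+{\bf N}$ with $\Pi$ the rank-one projector onto $\Psi_{t,u,v}$ and ${\bf N}$ of spectral radius $<|\lambda|$, the Neumann series for $(I-{\bf H}_{t,u,v,u})^{-1}$ isolates the pole coming from the geometric series $\sum_n \lambda^n\Pi=\tfrac{1}{1-\lambda}\Pi$, and evaluating $\Pi[F]=\Psi_{t,u,v}\,\rho_{t,u,v}[F]$ gives the stated asymptotic equivalence (the factor $\lambda/(1-\lambda)$ arising because it is the quasi-inverse of ${\bf H}$, not of $\lambda\Pi$ alone).

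The genuinely delicate point, and assertion $(d)$, is to rule out any other eigenvalue of modulus $1$ on the critical line $\Re s=1$, $s\neq 1$. Here I would argue that on this line $\und{\bf H}_s={\bf H}_{s,s,0,s}$ carries oscillating factors $|h'(0)|^{i\,\Im s}$ together with the dyadic weights, and that a value $\lambda(s)$ of modulus $1$ would force a cohomological (Diophantine-type) identity saying that $\log|h_a'|$ is, up to a coboundary, constant in an arithmetic progression with modulus $2\pi/\Im s$; the presence of the binary quotients $2^{-a}$ in the branches makes the relevant length spectrum non-lattice, so no such identity can hold for $s\neq 1$. This is the standard ``aperiodicity'' step of dynamical analysis, and it is the main obstacle because it must be carried out for the mixed real/dyadic operator rather than for ${\bf H}_s$ alone; I expect to reduce it to the real component by bounding $||\und{\bf H}_s^n||_0$ through \eqref{cofv} (so that the dyadic part integrates out to a harmless factor) and then invoking the non-lattice property of the {\em CL} map, whose branches involve the incommensurable scales $\log 2$ and the logarithms of the fixed points of the $h_a$.
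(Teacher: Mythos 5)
A preliminary caveat: the paper itself does \emph{not} prove Proposition~\ref{dec} --- it states explicitly that the proofs of Propositions~\ref{act}, \ref{dec} and \ref{lambda} ``are quite technical and are omitted here'' --- so your proposal can only be assessed on its own merits, not against the authors' argument. Your outline follows the standard dynamical-analysis template (two-norm inequality, quasi-compactness, positivity, analytic perturbation theory, aperiodicity on the line $\Re s=1$), and parts $(b)$ and $(c)$ are fine in outline: $(b)$ is a direct specialization of the weights together with the change-of-variables formula \eqref{cofv}, and $(c)$ is the usual spectral decomposition ${\bf H}_{t,u,v,u}=\lambda\Pi+{\bf N}$ plus Kato-type perturbation of the isolated eigenvalue. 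The first genuine gap is in your route to $(a)$: you invoke ``compactness of the embedding provided by the ${\cal C}^1$-regularity of the sections $\tilde F_y$'', and this is false as stated. Arzel\`a--Ascoli operates only in the $x$-direction; in the $y$-direction the space ${\cal F}$ imposes no regularity at all, the fiber $\Q$ is not compact, and the norms $||\cdot||_0$, $||\cdot||_1$ are of $L^1(\nu)$ type in $y$, where bounded sets are never relatively compact. Moreover the operator itself provides no smoothing in $y$: it acts by composition with bijections $h$ of $\Q$, so the image of the unit ball is no better behaved in $y$ than the ball itself, and Hennion's theorem cannot be applied off the shelf. Bounding the essential spectral radius is precisely where the specific structure must enter (the weights $|h'(y)|_2$ and $G_2(h(y))/G_2(y)$ are locally constant in $y$, the weight $\max(1,\log|y|_2)$ built into the sections, the constraint $\Re(t-v-|u-1|)>0$ of Proposition~\ref{act}); this is the technical core that your sketch waves away. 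Similarly, uniqueness of the dominant eigenvalue needs an irreducibility argument for the \emph{extended} system, and positivity alone is not enough on a non-compact fiber whose dynamics is a mere ``accumulator'' (the dyadic inverse branches are not contractions: $|h_a'(y)|_2=2^a/|1+y|_2^2\ge 1$ on the unit ball of $\Q$).

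The second gap concerns $(d)$, and here one of your ingredients points in the wrong direction. You claim that ``the presence of the binary quotients $2^{-a}$ in the branches makes the relevant length spectrum non-lattice'', but powers of $2$ produce exactly \emph{lattice} behaviour: $\log|h_a'(0)|=-a\log 2\in(\log 2)\,{\mathbb Z}$, and --- worse --- the dyadic weights $|h'(y)|_2^s$ and the $G_2$-ratios are \emph{always} powers of $2$, so the whole dyadic component contributes purely lattice, hence potentially resonant, oscillations at frequencies commensurable with $2\pi/\log 2$. Any non-arithmeticity must come from the real derivatives at periodic points of $T$ (quadratic irrationalities), and one must actually prove that these defeat the lattice contribution of the dyadic side; note also that periodic points of the extended map require the fixed-point quadratic to split in $\Q$, an additional subtlety. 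Finally, your proposed reduction (``bounding $||\und{\bf H}_s^n||_0$ through \eqref{cofv} so that the dyadic part integrates out'') cannot give the statement: the change of variables is exact only at exponent $1$, so to collapse the $y$-integral you must take moduli, which simultaneously erases the oscillation $|h'(x)|^{i\Im s}$ in the real component and yields only the trivial bound that the spectral radius is at most that of $\und{\bf H}_1$, i.e.\ at most $1$. The strict inequality on $\Re s=1$, $s\ne 1$, is exactly what has to be proved, and it does not follow from any estimate that begins by taking absolute values.
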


The third result  describes the  Taylor expansion   of $\lambda(t, u, v)$ at $(1, 1, 0)$, and makes precise the behaviour of the quasi-inverse  described in $(c)$.  

\begin{proposition}\label{lambda}
 The  Taylor expansion of  the eigenvalue $\lambda(t, u, v)$  at $(1, 1, 0)$, written as   
$\ \lambda(t, u, v) \sim  1 -  A(t-1)   +  B(u-1) +Dv  $, \ involves  the  constants 
\vskip 0.1cm 
\centerline{
$ A = - {\partial\lambda}/{\partial t}  (1, 1, 0) , \ \   B =   {\partial \lambda}/{\partial u}  (1, 1, 0), \ \  D =   {\partial \lambda}/{\partial v} (1, 1, 0)\,   $}
\vskip 0.2cm

 $(a)$
 The constants $A$ and $D$ already appear in the context of the plain dynamical system, and are precisely described in \eqref{ED} and \eqref{A}. In particular $A-D$   is equal to the integral $E := \E_\Psi [2 |\log x|] $; 

\smallskip
$(b)$ 
The  constant $B$ is defined with the extension of the dynamical system and its invariant density $\Psi = \Psi_{1, 1, 0}$.  The constant $B+D$   is  equal to the  dyadic analog  $E_2$  of the integral $E$, namely,  $ B+ D = E_2:=    \E_\Psi [2  \log |y|_2 ] $; 

\smallskip
 $(c)$ The constant $A-B $
 is  the entropy of the extended dynamical system. \end{proposition}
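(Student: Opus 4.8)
The plan is to treat $\lambda(t,u,v)$ by analytic perturbation of the dominant eigenvalue, which is legitimate here by Propositions~\ref{act} and~\ref{dec}: on a neighbourhood of $(1,1,0)$ the operator ${\bf H}_{t,u,v,u}$ is analytic in $(t,u,v)$ and its dominant eigenvalue is isolated by a spectral gap, with dominant eigenfunction $\Psi_{t,u,v}$ and dual eigenmeasure $\rho_{t,u,v}$ degenerating at $(1,1,0)$ to $1$, $\Psi$ and $\rho$. Differentiating the identity ${\bf H}_{t,u,v,u}[\Psi_{t,u,v}]=\lambda\,\Psi_{t,u,v}$ and pairing with the dual eigenmeasure (normalised by $\rho[\Psi]=1$) cancels the eigenfunction-derivative terms and gives, for each $\theta\in\{t,u,v\}$, the first-order formula $\partial_\theta\lambda(1,1,0)=\rho\bigl[(\partial_\theta{\bf H}_{t,u,v,u})|_{(1,1,0)}[\Psi]\bigr]$. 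Each derivative merely inserts a logarithmic weight into the branch sum: $\log|h'(x)|$ for $\theta=t$, $\log d(h)=-\und a\log 2$ for $\theta=v$, and $\log|h'(y)|_2+\log\bigl(G_2(h(y))/G_2(y)\bigr)$ for $\theta=u$, the last because the exponent $z$ is slaved to $u$ in ${\bf H}_{t,u,v,u}$.

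I would then reduce every such integral to an expectation of a one-step cocycle under $\Psi$. Integrating the dyadic variable by the change-of-variables formula~\eqref{cofv} absorbs the $G_2$-weighted kernel, and integrating the real variable uses mass-preservation of the real transfer operator; together they convert $\rho$ applied to $\und{\bf H}_1$ carrying a branch weight $w$ into $\E_\Psi[\bar w]$, where $\bar w$ is that weight read forward along $\und T$. For $\theta=t$ and $\theta=v$ the weight does not involve $y$, so integrating $y$ out marginalises $\Psi$ to its first-coordinate density $\Psi_\#$; since $\Psi_\#\,d\mu$ is $T$-invariant it equals $\psi$ by uniqueness of the invariant density of the plain {\em CL} system. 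Hence $\partial_t\lambda$ and $\partial_v\lambda$ coincide with the plain-system derivatives, $A$ and $D$ are exactly the constants of~\eqref{A}--\eqref{ED}, and they inherit the plain relation $A=E-D$ with $E=\E_\psi[2|\log x|]=\E_\Psi[2|\log x|]$; this is part~$(a)$.

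Part~$(b)$ is the genuinely dyadic computation. After the same reduction the forward $u$-cocycle is $-\log|T'(y)|_2+\log\bigl(G_2(y)/G_2(Ty)\bigr)=\bigl(2\log|y|_2-\und a\log 2\bigr)+\bigl(\log G_2(y)-\log G_2(Ty)\bigr)$. The decisive point is that the telescoping part $\E_\Psi[\log G_2(y)-\log G_2(Ty)]$ vanishes, because $\Psi\,d\rho$ is $\und T$-invariant and the second coordinate of $\und T$ is $y\mapsto Ty$; therefore $B=\partial_u\lambda=E_2-D$ with $E_2:=\E_\Psi[2\log|y|_2]$, i.e.\ $B+D=E_2$. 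Part~$(c)$ then follows from Rokhlin's formula for the extended system: its Jacobian with respect to $\rho=\mu\times\nu$ factorises as $\log|T'(x)|+\log|T'(y)|_2+\log\bigl(G_2(Ty)/G_2(y)\bigr)$, the telescoping term again integrates to $0$ by invariance, and the surviving pieces give $\E_\psi[\log|T'(x)|]=E-D=A$ and $\E_\Psi[\log|T'(y)|_2]=D-E_2=-B$, so the entropy equals $A-B$.

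I expect the real difficulty to sit in part~$(b)$ and, with it, the underlying analytic bookkeeping. One must justify differentiating the eigenvalue relation inside the Banach space $\cal F$ uniformly in $y$ (using the analyticity of Proposition~\ref{act}), apply~\eqref{cofv} with the $G_2$ density correction now sitting inside a logarithmic weight rather than as a plain multiplier, and---most delicately---establish integrability of the unbounded cocycles $\log|y|_2$ and $\log G_2$ against $\Psi$, which is exactly what the factor $\max(1,\log|y|_2)$ in the norm of $\cal F$ is designed to control. Once integrability and $\und T$-invariance are secured, the telescoping cancellations and the identifications with $E$ and $E_2$ are routine.
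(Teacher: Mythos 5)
The paper itself gives no proof to compare against: it states explicitly that the proofs of Propositions \ref{act}, \ref{dec} and \ref{lambda} ``are quite technical and are omitted here''. Judged on its own terms, your argument is the standard dynamical-analysis route and is essentially correct: the first-order perturbation identity $\partial_\theta\lambda(1,1,0)=\rho\bigl[(\partial_\theta {\bf H}_{t,u,v,u})|_{(1,1,0)}[\Psi]\bigr]$ (legitimate thanks to the spectral gap and analyticity supplied by Propositions \ref{act} and \ref{dec}), the conversion of each weighted branch sum into a forward expectation $\E_\Psi[\bar w]$ via the change of variables \eqref{cofv} and duality, the marginalisation of $\Psi$ onto the real coordinate for the $t$- and $v$-derivatives (valid because the projection intertwines $\und T$ with $T$ and the plain {\em CL} system has a unique a.c.\ invariant density, by Chan's ergodicity), the telescoping of the $\log G_2$ term by $\und T$-invariance, and the Rokhlin-type computation of the entropy, which indeed gives $A+(-B)+0=A-B$. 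The caveats you name yourself at the end (analytic dependence of the eigenelements, integrability of $\log|y|_2$ and $\log G_2$ against $\Psi\,d\rho$, hypotheses of Rokhlin's formula for the extended system) are exactly the genuinely technical points, and they are what Propositions \ref{act} and \ref{dec}, taken as given, are designed to support.

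Two discrepancies should be flagged, neither fatal. First, a sign: with the paper's normalisation, forced by the displayed formula ${\bf H}_{t,v}[f](x)=(1+x)^{-2t}\sum_{a\ge 0}2^{a(v-t)}f\bigl(2^{-a}/(1+x)\bigr)$ and by $D=(\log 2)\,\E_\psi[\und a]>0$, one has $d(h_a)=2^a$, so the $v$-derivative inserts the weight $+\und a\log 2$, not $-\und a\log 2$ as you wrote; the slip does not propagate, since your part $(b)$ and your final identifications use the correct $D$. Second, your part $(a)$ establishes $A=E-D$, i.e.\ $E=A+D$; this agrees with \eqref{A}, with the numerics ($1.62352+0.97693=2.60045$), and with Figure \ref{costs} (the cost $Q^2$ carries the constant $A+D\doteq 2.60045$, which is $E$), whereas the proposition as printed asserts that ``$A-D$ is equal to the integral $E$''. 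That sentence is an internal inconsistency (typo) of the paper --- the symmetric companion of the correct relation $B+D=E_2$ is $A+D=E$ --- so what you prove is the corrected statement, which is the one actually used elsewhere in the paper.
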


\vspace{-0.3cm}
\paragraph{\bf \em Final result for the analysis of the {\em CL} algorithm.} 
 We then obtain our final result:  
 \begin{theorem}  \label{thm1}
 The mean values $E_N [c]$ for $c \in \{ K,  \sigma, q, \varrho, r, q_2\}$   on the set $\Omega_N$ are  all of order $\Theta(\log N)$ and  admit  the precise  following estimates,  
 $$ E_N[K]  \sim  \frac 2{H} \log N ,  \qquad  \E_N[c] \sim  M(c) \cdot \E_N[K],  \quad \hbox{for} \ \  c \in \{\sigma, q, \varrho, r, q_2\}\, .$$ The constant $H$ is the entropy of the extended system.   The  constants $H$ and $M(c)$ are expressed  with   a scalar product that involves
 the  gradient $\nabla \lambda $ of the dominant eigenvalue at $(1, 1, 0)$ and the  beginning $\hat \gamma_C$ of the  quadruple  $ \gamma_C$ associated with the cost $c$. More precisely
 \vskip 0.1cm 
 \centerline{
 $H = -\langle \nabla \lambda, (1, 1, 0)\rangle , \qquad M(c) =  \langle \nabla \lambda,  \hat \gamma_C\rangle \, . $}
 \vskip 0.1cm 
The constants $M(c)$  are  exhibited in Figure \ref{costs}.
\end{theorem}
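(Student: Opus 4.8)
\emph{Proof plan.} The argument follows the dynamical-analysis pipeline: pass from the spectral picture of the transfer operators (Propositions~\ref{act}--\ref{lambda}) to the dominant singularities of the Dirichlet series, then to the asymptotics of their coefficient sums by Delange's Tauberian Theorem, and finally to the ratio~\eqref{sumcoef}. First I would treat the plain series. In the form $S(s)=\und{\bf J}_s\circ(I-\und{\bf H}_s)^{-1}[1](0,0)$ of~\eqref{Ssalt}, Proposition~\ref{dec}$(c)$ shows that the quasi-inverse behaves like $\lambda/(1-\lambda)$ times the dominant projector; on the diagonal $(s,s,0)$ Proposition~\ref{lambda} gives $\lambda(s,s,0)\sim 1-H(s-1)$ with $H=A-B=-\langle\nabla\lambda,(1,1,0)\rangle$ the entropy of the extended system, so $S(s)$ has a simple pole at $s=1$. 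Proposition~\ref{dec}$(d)$ ensures that on the rest of the line $\Re s=1$ the spectral radius is $<1$, hence the quasi-inverse is holomorphic there; this is exactly the hypothesis Delange's Theorem needs to conclude $\Phi_N[S]\sim c_0\,N^{2}$ for an explicit $c_0>0$.

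I would then analyse the bivariate series $S_C(s,w)$. By Proposition~\ref{prop:gamma} and~\eqref{gamma} its dominant part is driven by $\Lambda(s,w):=\lambda(\gamma_C(s,w))$ with $\gamma_C(s,w)=s(1,1,0,1)+w\,\gamma_C$; its pole in $s$ lies on the curve $s(w)$ solving $\Lambda(s(w),w)=1$, $s(0)=1$ (for $c=K$ the same holds with $e^{w}\lambda(s,s,0)=1$). Since the first three coordinates of $\partial_s\gamma_C(s,w)$ are $(1,1,0)$ and $\partial_w\gamma_C(s,w)=\gamma_C$, the chain rule gives $\partial_s\Lambda(1,0)=-H$ and $\partial_w\Lambda(1,0)=\langle\nabla\lambda,\hat\gamma_C\rangle=M(c)$, whence by implicit differentiation
$$s'(0)=-\frac{\partial_w\Lambda(1,0)}{\partial_s\Lambda(1,0)}=\frac{M(c)}{H}.$$
Equivalently, differentiating at $w=0$ turns the simple pole of $S$ into a double pole of $\hat S_C$ in~\eqref{hatSC}: the cumulative operator $\und{\bf H}_{s,(C)}$ placed between the two quasi-inverses contributes precisely the factor $M(c)$ through first-order perturbation of $\lambda$, so Delange's Theorem yields $\Phi_N[\hat S_C]\sim \tfrac{2M(c)}{H}\,c_0\,N^{2}\log N$.

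Forming the ratio~\eqref{sumcoef}, the $N^2$ factors cancel and $\E_N[c]\sim \tfrac{2M(c)}{H}\log N$; equivalently, as the coefficient sums grow like $N^{2s(w)}$, one has $\E_N[c]\sim\partial_w\log\Phi_N[S_C(\cdot,w)]\big|_{w=0}\sim 2\,s'(0)\log N$, the factor $2$ being the exponent $2s$ carried by $q^{-2s}$ under the constraint $q\le N$. Specialising to $c=K$, where each step carries weight $1$ (so $M(K)=1$), gives $\E_N[K]\sim\tfrac{2}{H}\log N$, and dividing yields $\E_N[c]\sim M(c)\,\E_N[K]$ for the remaining costs, with the values $M(c)=\langle\nabla\lambda,\hat\gamma_C\rangle$ read off Figure~\ref{costs}. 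I expect the main obstacle to be the Tauberian step rather than the algebra: one must check that the quasi-inverses extend analytically to a punctured neighbourhood of the line $\Re s=1$ with $s=1$ as their only singularity (via Proposition~\ref{dec}$(d)$) and that the contribution of the non-dominant spectrum is uniformly controlled, so that Delange's Theorem applies to the genuine double pole of $\hat S_C$; verifying that $\und{\bf H}_{s,(C)}$ produces exactly the coefficient $M(c)$, and not merely a term of the same order, is the delicate bookkeeping at the heart of the argument.
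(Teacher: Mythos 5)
Your proposal follows essentially the same route as the paper's own proof: express $S(s)$ and $\hat S_C(s)$ via the quasi-inverses, use Propositions~\ref{act}$(b)$, \ref{dec}$(c)$--$(d)$ and \ref{lambda} to locate a simple pole (resp.\ double pole) at $s=1$ with residue data $H=A-B$ and $M(c)=\langle\nabla\lambda,\hat\gamma_C\rangle$ coming from the cumulative operator, apply Delange's Tauberian Theorem, and take the ratio~\eqref{sumcoef}; your implicit-differentiation view of the pole curve $s(w)$ is just an equivalent packaging of the same first-order perturbation of $\lambda$. The computation, including $M(K)=1$ and the final constants $2M(c)/H$, matches the paper's argument.
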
 
\vspace{-0.2cm}

\begin{proof}    Now, the Tauberian Theorem comes into play, relating the behaviour of  a Dirichlet series $F(s)$ near its dominant singularity  with asymptotics for the sum $\Phi_N(F)$ of its first $N$ coefficients.  Delange's Tauberian Theorem  is stated as follows (see \cite{De}):  
 
  \smallskip
    {\em Consider for $\sigma >0$  a Dirichlet series $F(s) := \sum_{n \ge 1} a_n n^{-2s}$ with non negative coefficients  which converges for $\Re s>\sigma$. Assume moreover: 
    \smallskip
   
   \hskip 0.5cm  $(i)$ $F(s)$ is analytic on $\{\Re s, s \not = \sigma\}$,  
  
  \hskip 0.5cm  $(ii)$ near $\sigma$,  $F(s)$ satisfies  $F(s) \sim A(s)(s-\sigma)^{-(k+1)}$  for some integer $k \ge 0$ .

\smallskip   
Then, as $N \to \infty$,   the sum $\Phi_N(F)$ of its first $N$ coefficients  satisfies 
\vskip0.1 cm 
\centerline{
$\Phi_N(F):= \sum_{n \le N} a_n  \sim  2^k \, A(\sigma) \,  [\sigma \Gamma(k+1)]^{-1}  \,  N^{2\sigma}\,  \log^k N\ .$}
\vskip 0.1cm}

We  now show that the two dgf's $S(s)$ and $\hat S_C(s)$ satisfy the hypotheses of the Tauberian theorem.
  The two expressions obtained in \eqref{Ssalt} and \eqref{hatSC}  involve quasi-inverses  $(I-\und {\bf H}_s)^{-1}$,  a  simple one in  \eqref{Ssalt}, a  double  one in   \eqref{hatSC}. 
  
 \smallskip 
     First, Propositions \ref{act}$(b)$ and  \ref{dec}$(d)$ prove that  such  quasi-inverses  are  analytic on  $\Re s \ge 1, s \not = 1$.  Then Proposition \ref{dec}$(c)$,  together with Eqn \eqref{hatSC},  shows that $S(s)$ and $\hat S_C(s) $ have a pole at $s = 1$, of order 1 for $S(s)$, of order 2 for $\hat S_C(s)$.
     
     \smallskip
  We now evaluate the dominant constants:  first, the estimate  holds, 
  \vskip 0.1 cm
  \centerline{
$1- \lambda(s, s, 0) \sim (A-B) (s -1) = H(s-1)$, with $H = - \langle \nabla \lambda, (1, 1, 0)\rangle $.}

\vskip 0.1 cm 
Second,    with Proposition \ref{dec}$(c)$,    the  dgf's $S(s)$ and  $\hat S_C(s) $ admit  the  following estimates  which  both involve the constant $a=\und {\bf J}[\Psi](0, 0)$, namely, 

 \vspace{-0.15cm}
 $$  S(s) \sim  \frac {a} {H (s -1)}, \quad  \hat  S_C(s) \sim  \frac {a} {H^2 (s -1)^2}\  \rho \left[\und {\bf H}_{1, (C)}[\Psi] \right] \, .$$
 
 \vspace{0.1cm}
   We now  explain the occurence of the constant $M(c)$:  we use  the definition of the triple $\hat \gamma_C(s, w)$, the definition of the cumulative operator   $\und {\bf H}_{1, (C)}$ as the derivative of the bivariate operator ${\bf H}_{\gamma_C(s, w)}$ at $(s, w) = (1, 0)$,  and the fact that ${\bf H}_{1, 1, 0, 1} = \und {\bf H}_1$ is the density transformer. This entails  
 the sequence of equalities, 
 
 \vspace{-0.15cm}
$$  \rho\left[ \und {\bf H}_{1, (C)} [\Psi] \right]   =  \frac{\partial}{\partial w} \lambda(\hat  \gamma_C(1, w))\Big|_{w = 0} = \langle \nabla \lambda , \hat \gamma_C \rangle  = M(c)\, .$$
   \end{proof}

\vspace{-0.2cm}
\paragraph{\bf \em  About the constant $B$.} 
The  invariant  density  $\Psi$ --more precisely   the function 
$\hat \Psi := \Psi\cdot   G_2$ --
satisfies a functional equation  of the same type as the invariant function $\psi$,  (described in Eqn \eqref{funceqn}),  namely, 
$$  \hat \Psi (x, y) = 
 \left(\frac 1 {1+x}\right)^2 \ \left|\frac 1 {1+y}\right|_2^2 \ \sum_{a \ge 0}  \hat  \Psi \left(\frac {2^{-a}}{1+x},  \frac {2^{-a}}{1+y}\right)\, .$$
  Comparing to Eqn \eqref{funceqn}, we ``lose'' the factor $2^{-a}$ in the sum,   and so we have not succeed in finding an explicit formula for $\Psi$. We do not   know how to evaluate the integral $E_2$ defined in Proposition \ref{lambda}$(b)$. 
  
  \smallskip
 However, we conjecture\footnote{This will be explained in the long paper.}  the equality $D-B= \log 2$,  from experiments of the same type as those described in Figure \ref{exe}. This would entail an explicit value for the entropy 
  of the extended system, 
 $$ \frac 1 {2 \log 2\! - \!\log 3 } \left[  \frac {\pi^2}{6} +  2 \sum_{k \ge 1}  \frac {(-1)^k}  {k^2\,  2^k} - (\log 2) (3 \log 3\!- \!4 \log 2) \right]  \doteq 1.33973\ldots
 $$

\paragraph{\bf \em Conclusions and Extensions.}
   We have studied the Continued Logarithm Algorithm 
and    analyzed  in particular the number of pseudo divisions, and the total number of shifts. It would be nice to obtain an explicit expression of the invariant density, that should entail a proven expression of the entropy of the dynamical system.    It is  also surely possible to  analyze  the bit complexity of the algorithm,   notably in the case when one eliminates the rightmost zeroes   when  are shared by the two $q_i's$ (as suggested by Shallit). Such a version of this algorithm  may have a competitive bit complexity that  merits  a further study. 
   
 \smallskip  
 There exist two other gcd algorithm that are based on binary shifts, all involving a dyadic point of view: the Binary Algorithm, and   ``the Tortoise and the Hare'' algorithm, already analyzed in   \cite{Va1} and  \cite{DaMaVa}; however, the role of the binary shifts is different in each case.   The strategy of the present algorithm is led by the most significant bits, whereas  the strategy of the ``Tortoise and the Hare'' is led by the least significant bits.  The Binary algorithm  adopts  a mixed strategy, as it performs both right-shifts  and subtractions. We have the project to unify the analysis of  these three algorithms, and better understand the role of the dyadic component in each case.


\begin{thebibliography}{99} 


\bibitem{bor}
Jonathan M. Borwein, Kevin G. Hare and Jason G. Lynch. \newblock Generalized Continued Logarithms and Related Continued Fractions, \newblock Journal of Integer Sequences, vol. 20 , 2017.

\bibitem{Bou} Nicolas Bourbaki, Vari\'et\'es diff\'erentielles et analytiques.  Springer 2007

\bibitem{chan1} Hei-Chi Chan. The Asymptotic Growth Rate of Random Fibonacci Type Sequences.  {Fibonacci Quarterly, vol. 43, no. 3, pp. 243--255, 2005.}

\bibitem{DaMaVa} Benoit Daireaux, V\'eronique Maume-Deschamps, Brigitte Vall\'ee.  The Lyapounov Tortoise and the Dyadic hare. Proceedings of AofA'05, DMTCS,  pp. 71--94, 2005.

\bibitem {De} {Hubert Delange.} 
\newblock {G\'en\'eralisation du Th\'eor\`eme d'Ikehara,} 
\newblock {Ann. Sc. ENS} 71,   pp  213--242, 1954 

\bibitem{gosper} Bill Gosper. \newblock Continued fraction arithmetic, \newblock {{ Unpublished manuscript}}, ca. 1978. 

\bibitem{Ko} Neal Koblitz. \newblock  $p$-adic Numbers, $p$-adic analysis  and Zeta functions. 2nd edition,  Springer Verlag, 1984

 \bibitem{shallit} Jeffrey Shallit. \newblock  Length of the continued logarithm algorithm on rational inputs. \newblock https://arxiv.org/abs/1606.03881v2  arXiv:1606.03881v2, 2016.

\bibitem{Va1}  Brigitte Vall\'ee.  Dynamics of the Binary Euclidean Algorithm: Functional analysis and
operators,  Algorithmica, vol. 22, no. 4, pp. 660--685, 1998.

\bibitem{Va2}  Brigitte Vall\'ee. Euclidean Dynamics.  Discrete and Continuous Dynamical Systems, vol. 15, no. 1, pp. 281--352, 2006. 

\end{thebibliography}
\end{document}